\theoremstyle{definition}
\newtheorem{definition}{Definition}
\newtheorem{lemma}{Lemma}
\newtheorem{result}{Result}
\newtheorem{model}{Model}
\newdimen\LineSpace
\tikzset{
    line space/.code={\LineSpace=#1},
    line space=3pt
}
\title{Modeling and Analysis of Energy Efficiency and Interference for Cellular Relay Deployment}
\author{Fanny Parzysz, \IEEEmembership{Student Member, IEEE}, Mai Vu, \IEEEmembership{Senior Member, IEEE}, Fran\c cois Gagnon, \IEEEmembership{Senior Member, IEEE}%
\thanks{This work is supported in part by the FQRNT and the NSERC - Ultra Electronics Industrial Chair in Wireless Emergency and Tactical Communications. F. Parzysz and F. Gagnon are with \'Ecole de Technologie Sup\'erieure, Montreal, Canada. M. Vu is with Tufts University, Medford, USA. (Fanny.Parzysz@lacime.etsmtl.ca; Mai.Vu@tufts.edu; Francois.Gagnon@etsmtl.ca) }
}
\begin{document}
\maketitle

%

\begin{abstract}
By relying on a wireless backhaul link, relay stations enhance the performance of cellular networks at low infrastructure cost, but at the same time, they can aggravate the interference issue. In this paper, we analyze for several relay coding schemes the maximum energy gain provided by a relay, taking into account the additional relay-generated interference to neighboring cells. 
First, we define spatial areas for relaying efficiency in log-normal shadowing environments and propose three easily-computable and tractable models. These models  allow the prediction of 1) the probability of energy-efficient relaying, 2) the spatial distribution of energy consumption within a cell and 3) the average interference generated by relays.
Second, we define a new performance metric that jointly captures both aspects of energy and interference, and characterize the optimal number and location of relays. These results are obtainable with significantly lower complexity and execution time when applying the proposed models as compared to system simulations. We highlight that energy-efficient relay deployment does not necessarily lead to interference reduction and conversely, an interference-aware deployment is suboptimal in the energy consumption. We then propose a map showing the optimal utilization of relay coding schemes across a cell. This map combines two-hop relaying and energy-optimized partial decode-forward as a function of their respective circuitry consumption. Such a combination not only alleviates the interference issue, but also leads to a reduction in the number of relays required for the same performance.

\end{abstract}

\bstctlcite{IEEEexample:BSTcontrol}

\begin{keywords}
\textit{energy efficiency; interference; relay deployment; cellular network; decode-forward; relay coding schemes; log-normal shadowing; models for performance analysis}
\end{keywords}

\section{Introduction}
\label{introduction}

In the urge to limit the energy consumption of cellular networks while maintaining service quality and ubiquitous access, relaying is a flexible and economical solution to enhance performance, eliminate coverage dead zones or alleviate traffic hot zones. It is envisioned as part of next-generation cellular networks, along with pico- and femtocells \cite{aliu2013}. 
Unlike small cells, relay stations are not connected to the core network through a wireline backhaul connection but have to rely on wireless transmission to access the base station. 
This offers significant infrastructure cost reduction and deployment flexibility but, at the same time, can aggravate the interference issue. Exploring energy-optimized relaying jointly with interference reduction and choice of coding scheme opens new perspectives for efficient relay deployment.


\subsection{Motivation and Prior work}


As highlighted in \cite{book_green_network_ch6}, the relay location within a cell can significantly affect the system performance. Its optimization is necessary to guarantee maximal gains within reasonable deployment cost and to avoid poor relay locations with negligible gains. Substantial efforts have been paid in optimizing the relay location with regards to capacity  \cite{dziong2012, Zolotukhin2012, Minelli2014}, coverage \cite{joshi2011, khakurel2012} and energy \cite{Guiying2012, Chandwani2010}.
The serving area of relay stations has generally been envisaged as a small circular zone around each relay \cite{Peng2015} or covers the cell edge exclusively as in \cite{Chandwani2010}. However, such serving areas are neither energy- nor capacity-optimized. An analysis of the serving area has been provided in \cite{Minelli2014conf} for capacity and in  \cite{Journal2} for energy, via the newly-defined relay efficiency area.
These works, however, account only for the transmit energy but not circuitry consumption, and consider only path-loss but not shadowing, which is a significant cause of signal degradation.

%

On the one side, analyzing only the useful transmit power, i.e. the power radiated by the antenna, allows fair characterization of the network upper-bound performance, as done in \cite{joshi2011, khakurel2012, Zolotukhin2012,  dziong2012, Minelli2014}.
However, it is generally not sufficient for an energy-efficient analysis. Depending on the considered technology and hardware quality, the energy dissipated in circuitry for site cooling, network maintenance and signal processing (even if no transmission is operated), may dominate the overall consumption \cite{correia2010,book_green_network} such that the performance limits predicted by theory may not be realized. This is particularly true when the network does not operate at full load.

On the other side, considering the transmit energy alone provides a basis for complementary analysis of multi-cell networks, as it is directly related to inter-cell interference (ICI), another great challenge of cellular networks. 
Unlike base stations, relay stations are generally equipped with omni-directional antennas and can drastically increase the interference to a neighboring cell. Several questions remain open: Does energy-efficient relay deployment necessarily lead to interference reduction? Does deploying few relay stations far from the cell edge but serving a large part of the cell generate less interference than deploying many relay stations transmitting at low power but potentially closer to cell edge? Answering these questions necessitates the understanding of whether there exists a trade-off between energy-efficient and interference-aware relay deployment in cellular networks.

The ICI constraint has been investigated in the context of the relay placement for capacity and coverage enhancement in \cite{joshi2011, khakurel2012, Hamdi2012, Minelli2014}, in which several models have been proposed. In \cite{joshi2011, khakurel2012, Hamdi2012}, 
relays are located according to a predefined pattern, e.g. around a circle centered at their serving base station. In this case, interference appears to be very pessimistic, since each neighboring node (among the base stations and relays of each cell) is assumed to interfere with the reference user in each time slot with full transmit power.
An existing model for interference evaluation allows for interference from nearby stations to be computed in closed-form, while afar interferers are modeled as a continuum rather than a discrete set \cite{Minelli2014}. 
While such models are particularly suitable for capacity maximization or coverage extension,
they are not useful in analyzing or optimizing energy consumption because of the assumption that all stations transmit at the maximum power while the actual transmitted power can be much less. Understanding how energy consumption is distributed across the cell is essential for an energy approach and, to the best of our knowledge, no interference analysis has been performed for energy-optimized relay-assisted networks.

\subsection{Main contributions and Paper overview}

Our objective is to investigate how the deployment of decode-forward-based (DF) relay stations, specifically their number and location, can reduce the overall energy consumption of a cell and how it affects at the same time the performance of neighboring cells, due to the additional interference. 
First, we propose spatial definitions of relaying efficiency, namely the Efficiency Areas, as the cell area for which a given performance requirement is satisfied.
For shadowing environments, we define:
\begin{itemize}
\item the Relay Efficiency Area (REA), inside which a user has at least a probability $\mathbb{P}_T$ to be served by the relay; this REA extends the model in \cite{Journal2} to include shadowing.
\item the Energy Efficiency Area (EEA), for which the energy consumption does not exceed $\mathsf{E}_T$;
\end{itemize}
For the energy consumption, we account for (a) the transmit energy, (b) the additional energy consumed by the relay station for signal processing (decoding and re-encoding), (c) the energy loss at the RF amplifier and (d) the transmission-independent energy offset dissipated for network maintenance and site cooling, also called idle energy.

Simulating the network performance offers wide possibilities but is time-consuming and can rapidly turns out to be unfeasible in shadowing environments.
Moreover, they provide neither generalization to other simulation settings nor performance limits like an analysis based on capacity bounds. Therefore, we propose in this paper easily-computable models for relaying probability (REA) and energy consumption (EEA), which allow meaningful performance analysis without requiring extensive simulations. Such models have wide application and offer valuable support for load balancing, resource management or base-station switch off.

Second, based on the proposed models for REA and EEA, we define a new framework for interference analysis. We characterize the average relay-generated interference to a given neighboring user and propose a model for it. Next, we define a new performance metric, called $\Gamma$, which balances the energy gain provided by relays and the additional interference they generate.

Third, we apply the proposed models to investigate efficient relay deployment. With an energy-efficiency perspective, we characterize how the circuitry energy consumption, dissipated for signal processing and network maintenance, affects the network performance, as a function of the cell radius and the number of relays.
We also highlight that energy-efficient relay deployment does not necessarily lead to interference reduction and conversely, an interference-aware relay deployment is suboptimal in terms of energy reduction.

Finally, most approaches to efficient relaying are limited to the analysis of the simple two-hop relaying scheme. This scheme provides some performance enhancement but fails to capture the true potential of relays.
Thus, in addition to two-hop relaying, we consider two energy-optimized partial decode-forward schemes of \cite{Journal1}, which respectively minimize the network consumption and the relay consumption only. The second scheme is particularly relevant for analyzing relay-generated interference.
Given the increased circuitry consumption of such optimized schemes, we draw a map showing the cell areas for the optimal use of each scheme that maximizes the energy-to-interference ratio $\Gamma$. The resulting map not only alleviates the interference issue, but also allows the reduction of the number of relays necessary to reach the same performance.


This paper is organized as follows. The cell configuration and channel model are described in Section \ref{sec:system_model}. 
The reference coding schemes  and related model for energy consumption are described in Section \ref{sec:model_energy_consumption}.
The models for REA and EEA are proposed and analyzed respectively in Sections \ref{sec:REA} and \ref{sec:EEA}. The new framework for interference analysis is characterized in Section \ref{sec:IEA}. The application of these model in energy- and interference-aware relay deployment is covered in Section \ref{sec:performance} and Section \ref{sec:conclusion} concludes this paper.

\section{System Model for a Relay-Aided Cell}
\label{sec:system_model}

We present in this section the main assumptions on the cell environment and channel model. We use the following notation:
BS stands for the base station, RS for the relay station and U for the user. Calligraphic $\mathcal{R}$ refers to the user rate. Upper-case letters denote constant distances, such as $\mathsf{D}$, $\mathsf{R}$ for radius, or $\mathsf{H}$ for height. Lower-case letters stand for variable distances or angles, e.g. the mobile user coordinates $(x,y)$ or $(r,\theta)$. Subscripts $_{d}$, $_{b}$ and $_{r}$ respectively refer to the link from user to BS (direct link), from relay to BS and from user to RS. $\mathbb{P}$ is used for probabilities and $\mathbb{E}$ for expectation. Finally, DTx refers to direct transmission, while RTx stands for relay-aided transmission, in the considered coding scheme.

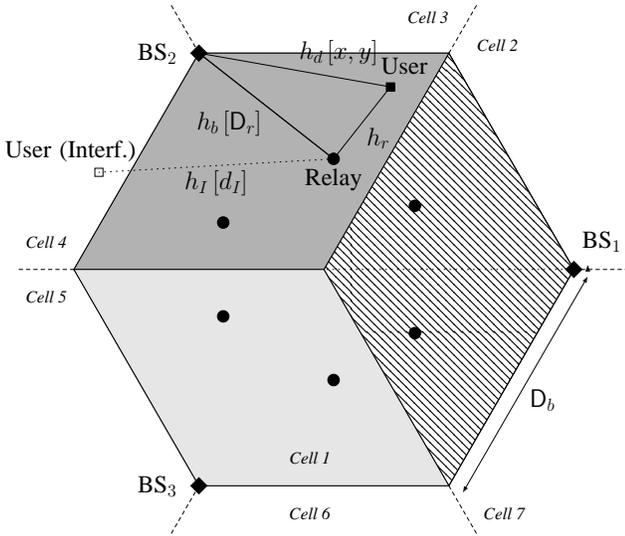
\begin{figure} 
\centering \centering \resizebox{0.95\columnwidth}{!}{%

\begin{tikzpicture}[scale=4,cap=round,>=latex]
  \newdimen\R
  \newdimen\r 
  \R=1.8cm
  \r=0.8cm
  \coordinate (center) at (0,0);
  
  \coordinate (bs1) at (\R,0);
  \coordinate (bs2) at (120:\R);
  \coordinate (bs3) at (240:\R);
  
  \coordinate (rs1) at (35:\r);
  \coordinate (rs11) at (-35:\r);
  
  \coordinate (rs22) at (155:\r);
  \coordinate (rs2) at (85:\r);
  
  \coordinate (rs3) at (275:\r);
  \coordinate (rs33) at (205:\r);

  \coordinate (user) at (70:1.4cm);
  \coordinate (poor_user) at (-0.9*\R,0.7cm);

 \draw (0:\R)
     \foreach \x in {60,120,...,360} {  -- (\x:\R) };

  \draw { (0:\R) -- (60:\R) -- (center) -- (300:\R) --(0:\R) } [fill=white!50!black, pattern=north west lines];
  \draw { (120:\R) -- (60:\R) -- (center) -- (180:\R) --(120:\R) } [fill=white!70!black];
  \draw { (300:\R) -- (center) -- (180:\R) -- (240:\R) --(300:\R) } [fill=white!90!black];
  
	\draw (bs1) node[diamond, draw, fill = black] {};
	\draw (bs2) node[diamond, draw, fill = black] {};
	\draw (bs3) node[diamond, draw, fill = black] {};
	
	\draw (rs1) node[circle, draw, fill = black] {};
	\draw (rs2) node[circle, draw, fill = black] {};
	\draw (rs3) node[circle, draw, fill = black] {};
	\draw (rs11) node[circle, draw, fill = black] {};
	\draw (rs22) node[circle, draw, fill = black] {};
	\draw (rs33) node[circle, draw, fill = black] {};
	
	\draw (user) node[rectangle, draw, fill= black] {};
	\draw (poor_user) node[rectangle, draw, fill= white] {};
	
	\draw { (bs2) -> (rs2)};
	\draw[loosely  dotted, thick] { (bs1) -- (center)};
	
	\draw[loosely  dotted,thick] (rs2) edge node [below] {\huge $h_{I} \left[d_I \right]$} (poor_user);	
	
  \draw[thick] (bs2) edge node [below left] {\huge $h_b \left[\mathsf{D}_r \right]$} (rs2);
  \draw[thick] (user) edge node [below right] {\huge $h_r $} (rs2);
  \draw[thick] (bs2) edge node [above right] {\huge $h_d \left[x,y \right]$} (user);
  
  \draw ($(bs1)+(0.2,0.2)$) node {\huge BS$_1$};
  \draw ($(bs2)+(-0.3,0)$) node {\huge BS$_2$};
  \draw ($(bs3)+(-0.3,0)$) node {\huge BS$_3$};
  \draw ($(rs2)+(0,-0.15)$) node {\huge Relay};
  \draw ($(user)+(0.1,0.15)$) node {\huge User};
  \draw ($(poor_user)+(0.3,0.15)$) node [anchor=east] {\huge User (Interf.)};
  
  \draw[<->,thick] ($(bs1)+(0.1,-0.05)$) edge node [below right] {\huge $\mathsf{D}_b $} ($(300:\R)+(0.1,-0.05)$);
  
  \draw[dashed] { (0:\R) -- (0:2.2)};
  \draw[dashed] { (60:\R) -- (60:2.2)};
  \draw[dashed] { (120:\R) -- (120:2.2)};
  \draw[dashed] { (180:\R) -- (180:2.2)};
  \draw[dashed] { (240:\R) -- (240:2.2)};
  \draw[dashed] { (300:\R) -- (300:2.2)};
  
 \draw ($(bs3)+(0.8,0.2)$) node {\Large \textit{Cell 1}};
 \draw ($(60:\R)+(60:0.3)+(0.2,-0.2)$) node {\Large \textit{Cell 2}};
 \draw ($(60:\R)+(60:0.3)+(-0.3,0)$) node {\Large \textit{Cell 3}};
 \draw ($(180:\R)+(-0.2,0.2)$) node {\Large \textit{Cell 4}};
 \draw ($(180:\R)+(-0.2,-0.2)$) node {\Large \textit{Cell 5}};
 \draw ($(bs3)+(0.8,-0.2)$) node {\Large \textit{Cell 6}};
 \draw ($(300:\R)+(0.4,-0.2)$) node {\Large \textit{Cell 7}};

\end{tikzpicture}
}
\caption{System model for a hexagonal cell aided by 6 relays}
\label{fig:cell_system}
\end{figure}

\subsection{Cell topology}

We consider an hexagonal cell with edge distance $\mathsf{D}_b$, as depicted in Figure \ref{fig:cell_system}. It consists of 3 sectorized base stations BS$_i$, $i \in \left\lbrace 1,3 \right\rbrace$, located above surrounding buildings. We assume a typical radiation pattern for each base station and consider the antenna gain as given in \cite{kathrein}.
Each 120$^\circ$-sector $i$ is served by $N_r$ relay stations, all equipped with omni-directional antennas. 
A given relay is at a distance of $\mathsf{D}_r$ from its assigned BS.  
Finally, we assume a mobile user positioned at $(x,y)$. 
It is associated with the closest base station and the relay that provides the highest energy gain. 
We do not consider techniques for relay selection or multi-relays schemes but the proposed analysis can be adapted to such configurations.
We define the maximal coverage by the outage requirement $\mathbb{P}_\text{out}$ that has to be maintained throughout the whole cell.

\subsection{Description of the relay channel}


We consider  half-duplex relaying performed, without loss of generality, in time division. 
We assume that the relay operates on the same frequency resource as the user it serves (namely \textit{in-band} relaying in LTE-systems).
The multiple access strategy allows orthogonality between users within the cell, such that only one user is served for a given time and frequency resource.
A downlink transmission is carried out in two phases of equal duration. 
We assume Gaussian channels with independent additive white Gaussian Noises (AWGN) with equal variance $N$ on all links.
We respectively denote $h_d$, $h_b$ and $h_r$ the channel coefficients from base station to user (direct link), from base station to relay (wireless backhaul link) and from relay to user. 
In addition, $h_{I}$ refers to the channel from the interfering relay to a user in a neighboring cell.



We assume that the transmitted signal is degraded by both path-loss and shadowing.
Note that we do not consider small-scale fading, since fading coefficients are rarely known for power allocation and terminals are designed to be sufficiently robust against such small-scale parameters, especially for relatively low mobility environments.
Hence, this work is based on a per-block capacity analysis and can be understood as the mean relaying probability and energy consumption over a sufficient period of time such that small-scale fading is averaged out.


The BS, RS and user have different heights, moving neighborhoods and transmission ranges, such that different links have different properties, particularly in terms of path-loss. 
To fit this heterogeneity, we consider the channel model proposed in the WINNER II project \cite{winner}, where the path-loss of link $k \in \left\lbrace d,b,r, I\right\rbrace$, denoted $\gamma_{k}$, is specified by four parameters $A_{k}$, $B_{k}$, $C_{k}$ and $D_{k}$ depending on the global location of the transmitter and receiver (street level, rooftop...).
The shadowing coefficient $s_k$ is modelled as a log-normal random variable of variance $\sigma_k^2$.  We assume that all $s_k$'s are mutually independent.
Channel gains are given by:
\begin{align}
\label{eq:pathloss}
& \vert h_{k} \vert^2 = \frac{s_k}{\gamma_{k}}= \frac{s_k}{K_{k} d^{\alpha_{k}}} \\
\text{with} \quad &
\left\lbrace \begin{array}{rl}
\alpha_k = A_k / 10 & \\
10 \log_{10} \left( K_k \right) & =  B_{k} + C_{k} \log_{10} \left(\frac{f_c}{5} \right) \\
& + D_{k} \log_{10}\left((\mathsf{H}_\text{Tx}-1)(\mathsf{H}_\text{Rx}-1)\right)
\end{array} 
\right. \nonumber
\end{align}
where $\mathsf{H}_\text{Tx}$ and $\mathsf{H}_\text{Rx}$ are the respective heights (in meters) of transmitter and receiver, $d$ is the distance (in meters) between them and $f_c$ the carrier frequency (in GHz). 

%

\section{Coding schemes and Models \\ for Energy Consumption} 
\label{sec:model_energy_consumption}

We discuss the relaying schemes considered for analysis and their respective overall energy consumption. To simplify notation and focus on energy, we consider \emph{normalized transmissions of unitary length}, thus setting up a direct relation between energy and power.

\subsection{On the overall energy consumption}
\label{sec:overhead_energy}

The time-varying transmit energy, i.e. the energy radiated at the output of the RF antenna front-end, is referred as $E^{\mathsf{(RF)}}$. We assume individual energy constraints over the two transmission phases for the BS and RS, denoted as $\mathsf{E_B^{(m)}}$ and $\mathsf{E_R^{(m)}}$ respectively (with $^{\mathsf{(m)}}$ for maximum).
Most literature only accounts for this useful transmit energy, which is a fair assumption for capacity or coverage analysis, 
However, it is generally not sufficient for an energy analysis. 

First, significant energy is dissipated in circuit electronics for data transmission, especially in the RF amplifier which usually performs with considerably low efficiency. 
We consider the simplified, yet meaningful, approach of \cite{andreev2012}, where the amplifier inefficiency is assumed linear in the transmit energy $E^{\mathsf{(RF)}}$ and is characterized by a multiplicative coefficient, denoted $\eta_R$ for the relay and $\eta_B$ for the base station.

Second, we account for the circuitry consumption related to a transmission, i.e. to signal processing at the encoder and decoder. As in \cite{book_green_network_ch6}, we model this consumption by the following energy offsets: $\mathsf{E_B^{(Tx)}}$ for the BS transmission, $\mathsf{E_R^{(Rx)}}$ and $\mathsf{E_R^{(Tx)}}$ for the RS reception and transmission, and $\mathsf{E_U^{(Rx)}}$ for the reception at the mobile user.
Such energy offsets mostly depend on the quality of electronics and on the complexity of the signal processing performed at terminals, notably given by the considered coding scheme. 
Using simple relaying schemes can help decrease this energy consumption, but at the cost of potentially degraded network performance. 

Third, in addition to the energy related to the transmission itself, we consider the transmission-independent consumption, also referred as the idle energy. This is the energy dissipated for site cooling, network maintenance and additional signaling, which is consumed whether or not data is transmitted. As proposed in \cite{andreev2012}, it is modelled by an offset, consumed at each station and denoted as $\mathsf{E_R^{(idle)}}$ for relays, and $\mathsf{E_B^{(idle)}}$ for base stations.

\subsection{Coding schemes considered for analysis}
\label{sec:coding_scheme}

We now present the reference coding schemes and express for each scheme the overall energy consumption. In this work, we consider Gaussian signaling which accurately approximates OFDM-based communications \cite{OFDM_Gaussian}, as used in practical systems including LTE and WiMAX.
We consider downlink transmissions and assume that channel coefficients are known at transmitters to achieve the best possible performance. The power allocation aims at minimizing the energy consumption, given a fixed user rate $\mathcal{R}$.

\subsubsection{Direct transmission (DTx)}
Data is transmitted directly from the BS to the user over the two transmission phases, and not just only during the first one as done in most literature. This two-phase transmission allows fair performance comparison since, in this case, both direct and relay-aided transmissions have the same delay and consume the same time resource.
The energy consumption is given by:
\begin{align}
\hspace*{-12pt}\mathsf{E_{DTx}} = & \eta_B E_B^{\mathsf{(RF)}}
+ \left( \mathsf{E_B^{(Tx)}} + \mathsf{E_U^{(Rx)}} \right)
+ \left( \mathsf{E_B^{(idle)}} + N_r \mathsf{E_R^{(idle)}} \right)
\label{eq:overall_energy_DTx}
\end{align}
Setting $N_r=0$ gives the consumption of a reference scenario, where no relay station is deployed. However, when $N_r>0$, the energy consumption of DTx should account for the idle energy $N_r \mathsf{E_R^{(idle)}}$ dissipated at relay stations, even if DTx does not actually use those relays. 

\subsubsection{Two-hop relaying (2Hop)}
\label{sec:two-hop}


Two-hop relaying is the simplest decode-forward scheme and thus, gives lower-bounds of the performance that can be achieved with DF-based relaying.
In this scheme, the user sends its message to the relay station with rate $2\mathcal{R}$ during Phase 1. Then, the relay decodes the message, re-encodes it with rate $2\mathcal{R}$ and forwards it to the destination during Phase 2. The base station finally decodes using only the signal received from the relay station (the direct link is ignored).
The energy consumption is expressed as:
\begin{align}
\mathsf{E_{2Hop}} = & \eta_B E_B^{\mathsf{(RF)}} +
\left( \mathsf{E_B^{(Tx)}} + \mathsf{E_R^{(Rx)}} \right)
+ \eta_R E_R^{\mathsf{(RF)}}
\nonumber \\&
 + \left( \mathsf{E_R^{(Tx)}} + \mathsf{E_U^{(Rx)}} \right)
+ \left( \mathsf{E_B^{(idle)}} + N_r \mathsf{E_R^{(idle)}} \right).
\end{align}
We denote $\mathsf{E_{2Hop}^{(dsp)}} = \mathsf{E_R^{(Rx)}} + \mathsf{E_T^{(Rx)}}$ the additional energy dissipated at the RS for decoding and re-encoding, in comparison with the energy offset of DTx described in Eq. \eqref{eq:overall_energy_DTx}.

\subsubsection{Optimized partial decode-forward schemes}

Finally, we consider for simulations the partial decode-forward (PDF) schemes optimized for energy proposed in \cite{Journal1}.
In this, only part of the initial message is relayed, the rest being sent via the direct link.
The base station divides the data message into two parts $m_r$ and $m_d$ using rate splitting. 
During the first transmission phase, $m_r$ is broadcast to both the relay station and the user. At the end of this phase, only the relay decodes $m_r$ and then re-encodes it. During the second phase, the relay sends $\tilde{m_r}$ and the base station jointly sends $(m_r,m_d)$ using superposition coding. At the end of phase 2, the user jointly decodes $m_r$ and $m_d$ to recover the initial message.
Several power allocations have been proposed in \cite{Journal1}. In this work, we consider the two following schemes:
\begin{itemize}
\item Energy-optimized PDF (EO-PDF), which minimizes the total transmit energy consumption and is both energy- and rate-optimal (referred to as G-EE in \cite{Journal1}).
\item Interference-at-Relay PDF (IR-PDF), which minimizes the energy transmitted by the relay only, thus minimizing the relay-generated interference (namely R-EE in \cite{Journal1}).
\end{itemize}
We refer the reader to \cite{Journal1} for the detailed power and rate allocation for $m_r$ and $m_d$. Such optimized schemes arguably require complex implementation and fine synchronization but they provide theoretical upper-bounds of the performance achievable with DF-based relaying.

Little information is available on the circuitry energy consumed by such coding techniques. To account for the increased complexity of these schemes, we consider an additional energy offset $\mathsf{E_{pdf}^{(dsp+)}}$ to the overall consumption as follows:
\begin{align}
\hspace{-10pt}\mathsf{E_{EO}}& = \eta_B E_B^{\mathsf{(RF)}} +
\left( \mathsf{E_B^{(Tx)}} + \mathsf{E_R^{(Rx)}} \right)
+ \eta_R E_R^{\mathsf{(RF)}}
\nonumber \\&
 + \left( \mathsf{E_R^{(Tx)}} + \mathsf{E_U^{(Rx)}} \right)
+ \left( \mathsf{E_B^{(idle)}} + N_r \mathsf{E_R^{(idle)}} \right)
+ \mathsf{E_{pdf}^{(dsp+)}}.
\end{align}
In the performance analysis of Section \ref{sec:performance}, we will consider several values for $\mathsf{E_{pdf}^{(dsp+)}}$.
As for two-hop relaying, we denote $\mathsf{E_{pdf}^{(dsp)}} = \mathsf{E_R^{(Rx)}} + \mathsf{E_T^{(Rx)}} + \mathsf{E_{pdf}^{(dsp+)}}$, representing the additional energy dissipated in circuitry to perform this relaying scheme.


\section{Relaying Probability and Relay Efficiency Area with shadowing}
\label{sec:REA}

We first characterize the relaying efficiency in shadowing environments and define the Relay Efficiency Area (REA) as the cell area where RTx has more probability to save energy compared to DTx. We analyze how to compute such area and propose a simplified model for fast and accurate performance evaluation. Such analysis is described in detail since it will be used as basis for the energy and interference analysis, presented in Sections \ref{sec:EEA} and \ref{sec:IEA}.


\subsection{A new definition of the Relay Efficiency Area}

For a given channel realization $(h_d,h_b,h_r)$, the area covered by a relay-aided cell can be divided into two geographical regions, depending on whether DTx or RTx should be performed. The wider is the cell area served by the relay, the more efficient can a relay station be considered.
As highlighted in the introduction, various models are considered in the literature to characterize the serving area of a relay, but none are energy-optimized or account for shadowing.
Here, we extend the definition of the pathloss-only  Relay Efficiency Area (REA) in \cite{Journal2} to account also for shadowing environment. We define the REA as the set of all user locations for which relaying is \textit{statistically} more energy-efficient than DTx or is necessary to satisfy a given \textit{outage requirement} $\mathbb{P}_\text{out}$. Both coverage extension and energy gains are included in the REA definition.

\begin{definition} 
The Relay Efficiency Area (REA) of a network in shadowing environment is defined by the pair $(\mathcal{A}_{\mathsf{R}},\mathbb{P}_T)$. Any mobile user $M \left(x,y\right)$ within the geographical area $\mathcal{A}_{\mathsf{R}}$ is served by the relay station with at least the probability $\mathbb{P}_T$, either because RTx is more energy-efficient or because DTx is not feasible, i.e.
$\mathcal{A}_{\mathsf{R}} = \left\lbrace
M \left(x,y\right)
\quad \text{s.t.} \quad \mathbb{P}_T \leq \mathbb{P}_{\text{RTx}}
\right\rbrace$,
where $\mathbb{P}_{\text{RTx}} \left(x,y\right)$ is the probability for RTx at user $M \left(x,y\right)$.
\label{def:area_probability}
\end{definition}
The REA depends on the user rate, the relaying coding scheme and the channel radio propagation. As an example, the REA for two-hop relaying is illustrated in Figure \ref{fig:REA_prob} for various $\mathbb{P}_T$ and $\mathsf{E_{2Hop}^{(dsp)}} $. The dotted lines refer to the model for the REA  proposed in Section \ref{sec:modelREA}. Similarly, we can define $\mathbb{P}_{\text{DTx}} \left(x,y\right)$ as the probability for DTx at user $M \left(x,y\right)$, either because DTx is more energy-efficient or because RTx is not feasible. Then, the outage condition is expressed as $1-\mathbb{P}_\text{out} \leq \mathbb{P}_{\text{RTx}}\left(x,y\right) + \mathbb{P}_{\text{DTx}}\left(x,y\right)$ for user $M \left(x,y\right)$.

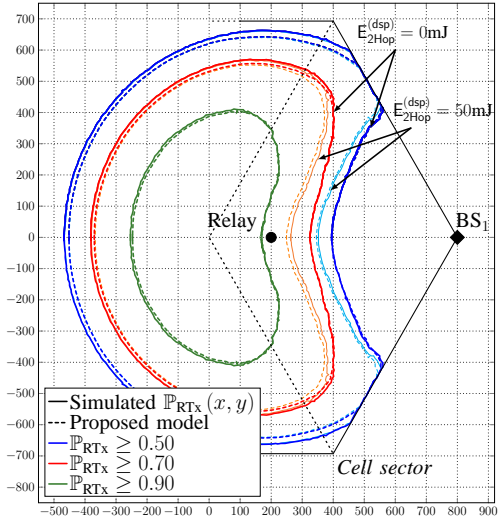
\begin{figure} 
\centering
\centering \resizebox{0.75\columnwidth}{!}{%

\begin{tikzpicture}[scale=4,cap=round,>=latex]

 \pgfplotsset{
    grid style = {
      dash pattern = on 0.05mm off 1mm,
      line cap = round,
      black,
      line width = 0.5pt
    }
  }

  \begin{axis}[%
	width=0.92\textwidth,
    height=\textwidth,
    	xmin=-550, xmax=920, ymin=-850, ymax=750,
	every axis/.append style={font=\large},  
    grid=major,%
    legend style={at={(axis cs:-530,-830)},anchor=south west, nodes=right, font=\huge},%
    mark size=2.0pt]

\addplot[solid, color=black, line width=1.5]  table[x=xuser ,y=yuser ,col sep=semicolon] {data_REA_figure_000mW_05.txt};
   \addlegendentry{Simulated $\mathbb{P}_{\text{RTx}} \left(x,y\right)$}
     \addplot[dashed, color=black, line width=1.5]  table[x=xuser ,y=yuser ,col sep=semicolon] {data_REA_figure_000mW_05model.txt};
 \addlegendentry{Proposed model}
     
\addplot[solid, color=blue, line width=1.5]  table[x=xuser ,y=yuser ,col sep=semicolon] {data_REA_figure_000mW_05.txt}; 
 \addlegendentry{$\mathbb{P}_{\text{RTx}} \geq 0.50$}

	\addplot[solid, color=red, line width=1.5]  table[x=xuser ,y=yuser ,col sep=semicolon] {data_REA_figure_000mW_07.txt};
	 \addlegendentry{$\mathbb{P}_{\text{RTx}} \geq 0.70$}

	\addplot[solid, color=OliveGreen, line width=1.5]  table[x=xuser ,y=yuser ,col sep=semicolon] {data_REA_figure_000mW_09.txt};   
	 \addlegendentry{$\mathbb{P}_{\text{RTx}} \geq 0.90$}


      \addplot[solid, color=ProcessBlue, line width=1]  table[x=xuser ,y=yuser ,col sep=semicolon] {data_REA_figure_050mW_05.txt};
    \addplot[solid, color=ProcessBlue, line width=1]  table[x=xuser ,y=yuser2 ,col sep=semicolon] {data_REA_figure_050mW_05.txt};

     \addplot[dashed, color=ProcessBlue, line width=1]  table[x=xuser ,y=yuser ,col sep=semicolon] {data_REA_figure_050mW_05model.txt};
    \addplot[dashed, color=ProcessBlue, line width=1]  table[x=xuser ,y=yuser2 ,col sep=semicolon] {data_REA_figure_050mW_05model.txt};

\addplot[solid, color=blue, line width=1.5]  table[x=xuser ,y=yuser ,col sep=semicolon] {data_REA_figure_000mW_05.txt}; 
    \addplot[solid, color=blue, line width=1.5]  table[x=xuser ,y=yuser2 ,col sep=semicolon] {data_REA_figure_000mW_05.txt};

     \addplot[dashed, color=blue, line width=1.5]  table[x=xuser ,y=yuser ,col sep=semicolon] {data_REA_figure_000mW_05model.txt};
    \addplot[dashed, color=blue, line width=1.5]  table[x=xuser ,y=yuser2 ,col sep=semicolon] {data_REA_figure_000mW_05model.txt};

	\addplot[solid, color=Orange, line width=1]  table[x=xuser ,y=yuser ,col sep=semicolon] {data_REA_figure_050mW_07.txt};
    \addplot[solid, color=Orange, line width=1]  table[x=xuser ,y=yuser2 ,col sep=semicolon] {data_REA_figure_050mW_07.txt};

	\addplot[dashed, color=orange, line width=1]  table[x=xuser ,y=yuser ,col sep=semicolon] {data_REA_figure_050mW_07model.txt};  
    \addplot[dashed, color=orange, line width=1]  table[x=xuser ,y=yuser2 ,col sep=semicolon] {data_REA_figure_050mW_07model.txt};

	\addplot[solid, color=red, line width=1.5]  table[x=xuser ,y=yuser ,col sep=semicolon] {data_REA_figure_000mW_07.txt};
    \addplot[solid, color=red, line width=1.5]  table[x=xuser ,y=yuser2 ,col sep=semicolon] {data_REA_figure_000mW_07.txt};

	\addplot[dashed, color=red, line width=1.5]  table[x=xuser ,y=yuser ,col sep=semicolon] {data_REA_figure_000mW_07model.txt};  
    \addplot[dashed, color=red, line width=1.5]  table[x=xuser ,y=yuser2 ,col sep=semicolon] {data_REA_figure_000mW_07model.txt};

    \addplot[solid, color=OliveGreen, line width=1.5]  table[x=xuser ,y=yuser2 ,col sep=semicolon] {data_REA_figure_000mW_09.txt};
    
    \addplot[dashed, color=OliveGreen , line width=1.5]  table[x=xuser ,y=yuser ,col sep=semicolon] {data_REA_figure_000mW_09model.txt};   
    \addplot[dashed, color=OliveGreen, line width=1.5]  table[x=xuser ,y=yuser2 ,col sep=semicolon] {data_REA_figure_000mW_09model.txt};

	\draw {(axis cs:400, 692.82) -- (axis cs:0,0) -- (axis cs:400,- 692.82)} [line width=1.25pt, loosely dotted];

	\addplot[line width=1pt, color=black]
    coordinates {
    (100, 692.82)(400, 692.82)(800,0)
	(400,-692.82)(100,- 692.82)
    };
        
     \draw {(axis cs:100, 692.82) -- (axis cs:0, 692.82)} [line width=1pt, loosely dotted];
     \draw {(axis cs:100,- 692.82) -- (axis cs:0,- 692.82)} [line width=1pt, loosely dotted];
    
    \coordinate (center) at (axis cs:0,0);
  \coordinate (bs1) at (axis cs:800,0);
	\coordinate (rs1) at (axis cs:200,0);
%
	\draw (bs1) node[diamond, draw, fill = black,scale=1] {};
%
	\draw (rs1) node[circle, draw, fill = black,scale=1] {};
	
  \draw (axis cs:850,50) node {\huge BS$_1$};
  \draw (axis cs:75,50) node {\huge Relay};	
  \draw (axis cs:400,-700) node [anchor=north west] { \huge \textit{Cell sector}};	
  
  \draw[<-,line width=1.5pt] (axis cs:520,350) -- (axis cs:600,600) node [anchor=south]{ \hspace*{10pt} \LARGE $\mathsf{E_{2Hop}^{(dsp)}} = 0$mJ};
  \draw[<-,line width=1.5pt] (axis cs:400,400) -- (axis cs:600,600);
  
  \draw[<-,line width=1.5pt] (axis cs:390,150) -- (axis cs:650,350) node [anchor=south west]{   \hspace*{-25pt} \LARGE $\mathsf{E_{2Hop}^{(dsp)}} = 50$mJ};
 \draw[<-,line width=1.5pt] (axis cs:350,250) -- (axis cs:650,350);  
  
%

    \end{axis}

\end{tikzpicture}
}
\caption{Relay Efficiency Area: simulation and model (Two-hop relaying, $N_r=1$, $\mathsf{D}_r =700$m)}
\label{fig:REA_prob}
\end{figure}

\newcounter{MYtempeqncnt}
\begin{figure*}[!t]
\setcounter{MYtempeqncnt}{\value{equation}}
\setcounter{equation}{8}
{\small
\begin{align}
\mathbb{P}_{\text{low}}^{(1)} = & \max \left( 0,
\mathbb{P} \left( E_{b+r} \leq E_d \right) \; - \left[ \;
\mathbb{P} \left( \mathsf{E_R^{(m)}} \leq E_d \leq \mathsf{E_B^{(m)}} +\mathsf{E_R^{(m)}} \right)
\mathbb{P} \left( E_{b+r} \leq \mathsf{E_B^{(m)}} +  \mathsf{E_R^{(m)}} \right)
+ \mathbb{P} \left( \mathsf{E_B^{(m)}} +\mathsf{E_R^{(m)}} \leq E_d \right)
\right]
\right) \label{eq:Prob_low1}
\\
\mathbb{P}_{\text{low}}^{(2)} = &  \mathbb{P} \left( \mathsf{E_R^{(m)}} \leq E_d \leq \mathsf{E_B^{(m)}} \right) \mathbb{P} \left( E_{b+r} \leq \mathsf{E_R^{(m)}} \right)
\label{eq:Prob_low2}
\end{align}
}
\hrulefill
\setcounter{equation}{\value{MYtempeqncnt}}
\end{figure*}

\subsection{Characterization of the relaying probability $\mathbb{P}_{\text{RTx}} \left(x,y\right)$}
\label{sec:modelREA}

For readability, the relaying probability is analyzed for downlink two-hop relaying only, but 
is valid for other scenarios, as discussed in Section \ref{sec:discussion_prob}.
We first focus on the case $\mathsf{E^{(dsp)}} = 0$.  

\subsubsection{Notation for the energy consumption}

We define $E_{d}$ as the RF transmit energy consumed by the BS when DTx is used. Similarly, $E_{b}$ and $E_{r}$ stand for the energy consumed by the BS and RS respectively when RTx is used.
Considering Gaussian signaling, we deduce the energy consumption based on capacity formulas. To send data to user $M \left(x,y\right)$, we have:
\begin{align}
\left \lbrace
\begin{array}{l}
E_{d} = \left(2^{\mathcal{R}}-1\right) N \frac{\gamma_{d}\left[ x,y\right]}{s_d} \\
E_{b} = \left(2^{2\mathcal{R}}-1\right) N \frac{\gamma_{b}\left[ \mathsf{D}_r \right]}{2 s_b} \\
E_{r} = \left(2^{2\mathcal{R}}-1\right)  N \frac{\gamma_{r}\left[ x,y \right]}{2 s_r} ,
\end{array} \right. 
\label{eq:E_i}
\end{align}
where $N$ is the variance of the AWGN, $s_k$ is the shadowing coefficient and $\gamma_k$ the path-loss, as given in Eq. \eqref{eq:pathloss}.
Then, we use superscript $^{(0)}$ to refer to the non-shadowing case, when only path-loss is considered (i.e. the variance of the shadowing coefficient $s_k$ is $\sigma_k^2=0$). Assuming log-normal shadowing environment, we have
\begin{align}
E_{k} = \frac{E_{k}^{(0)}}{s_k} \sim \log\mathcal{N} \left(\mu_k, \sigma_k^2\right)
\quad \text{with} \; \; \mu_k = \ln \left(E_{k}^{(0)} \right).
\label{eq:E_i_0}
\end{align}

\subsubsection{Probability for energy-efficient relaying}

For a given channel realization, a transmission is relayed if DTx is not feasible or if RTx is more energy-efficient. This implies:
\begin{align}
\mathbb{P}_{\text{RTx}} \left(x,y\right) = \mathbb{P}_{\text{CR}}\left(x,y\right) +\mathbb{P}_{\text{ER}}\left(x,y\right)
\end{align}
where $\mathbb{P}_{\text{CR}}$ and $\mathbb{P}_{\text{ER}}$ are as defined below. As we will see in next section, these probabilities are necessary to define a model for the energy consumption and interference.

\begin{definition}
When the user-BS link is weak and data cannot be sent using DTx given the channel realization, relaying is performed to extend the cell coverage and maintain the outage requirement. This is referred to as the Coverage Condition for Relaying (CR), which occurs with probability $\mathbb{P}_{\text{CR}}$. Given the energy constraints $\mathsf{E_B^{(m)}}$ at BS and $\mathsf{E_R^{(m)}}$ at RS, we have
\begin{align*}
\mathcal{C}_{\text{CR}} = & \; \left \lbrace 
  \;E_d > \mathsf{E_B^{(m)}}  \;\; \cap \;\;
E_b \leq \mathsf{E_B^{(m)}}   \;\; \cap \;\;
E_r \leq \mathsf{E_R^{(m)}}
\right\rbrace.
\end{align*}
\label{def:CR_condition} 
\end{definition}
\vspace*{-10pt}
Since energy consumption in each scenario is often independent, 
 $\mathbb{P}_{\text{CR}}$ can generally be computed as
 \begin{align*}
 \mathbb{P}_{\text{CR}} = \mathbb{P} \left( E_d > \mathsf{E_B^{(m)}} \right) \mathbb{P} \left( E_b \leq \mathsf{E_B^{(m)}} \right) \mathbb{P} \left( E_r \leq \mathsf{E_R^{(m)}} \right).
 \end{align*}
Similarly, we define the Coverage Condition for Direct transmission (CD) for which data can be sent only via the direct link. It occurs with probability $\mathbb{P}_{\text{CD}}$ and is generally computed in closed-form as for  $\mathbb{P}_{\text{CR}}$.


\begin{definition}
When both DTx and RTx are feasible, relaying is performed if it is more energy-efficient. This defines the Energy-Efficient Condition for Relaying (ER), expressed as: 
\begin{align*}
\mathcal{C}_{\text{ER}} = & \; \left \lbrace 
E_{k} \leq \mathsf{E_B^{(m)}}   \;\; \cap \;\;
E_{r} \leq \mathsf{E_R^{(m)}}  \;\; \cap \;\;
E_{b} + E_{r} \leq E_{d} \;
\right\rbrace.
\end{align*}
with $k \in\left\lbrace d,b\right\rbrace$. The ER-condition has probability $\mathbb{P}_{\text{ER}}$. 
\label{def:ER_condition}
\end{definition}

Similarly, we define the Energy-Efficient Condition for Direct transmission (ED) for which DTx is more energy-efficient. It occurs with probability $\mathbb{P}_{\text{ED}}$:
\begin{align}
\mathbb{P}_{\text{ED}} = \mathbb{P} & \left( 
E_{\lbrace d,b \rbrace} \leq \mathsf{E_B^{(m)}}
\; \cap \; E_r \leq \mathsf{E_R^{(m)}}
 \right)
 - \mathbb{P}_{\text{ER}}.
 \label{eq:Prob_ED}
\end{align}
Thus, the outage condition at user $M \left(x,y\right)$ is given by
\begin{align*}
1-\mathbb{P}_\text{out} \leq & \; \mathbb{P}_{\text{ER}} + \mathbb{P}_{\text{CR}} + \mathbb{P}_{\text{ED}}+ \mathbb{P}_{\text{CD}}
\\
\leq & \; \mathbb{P} \left( 
E_{\lbrace d,b \rbrace} \leq \mathsf{E_B^{(m)}}
\; \cap \; E_r \leq \mathsf{E_R^{(m)}}
 \right) + \mathbb{P}_{\text{CR}} + \mathbb{P}_{\text{CD}}
\end{align*}

Due to power constraints, $\mathbb{P}_{\text{ER}}$ is obtainable by a triple integral over $s_d$, $s_b$ and $s_r$, as in Eq. \eqref{eq:triple_integral} of Appendix A. Thus, contrary to $\mathbb{P}_{\text{CR}}$, it is not separable and may not exist in closed form. Even tough numerical computation  can be envisaged using mathematical software, 
this approach rapidly becomes unsuitable for large networks, even considering the simple two-hop scheme.


\subsection{Proposed model for the REA}

To address the key issue of computation, we propose a closed-form lower bound for $\mathbb{P}_{\text{ER}}$. 


\begin{lemma}
The probability $\mathbb{P}_{\text{ER}}$ for energy-efficient relaying is lower-bounded by the sum  $\mathbb{P}_{\text{low}} = \mathbb{P}_{\text{low}}^{(1)} + \mathbb{P}_{\text{low}}^{(2)}$ where $\mathbb{P}_{\text{low}}^{(1)}$ and $\mathbb{P}_{\text{low}}^{(2)}$ are as given in Eq. \eqref{eq:Prob_low1} and \eqref{eq:Prob_low2} at the top of page. \addtocounter{equation}{2}
Contrary to $\mathbb{P}_{\text{ER}}$, the lower-bound $\mathbb{P}_{\text{low}}$  consisting of elementary probabilities that can be individually computed or have closed-form approximations.
\label{lemma:P_low}
\end{lemma}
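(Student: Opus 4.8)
The plan is to massage $\mathbb{P}_{\text{ER}}=\mathbb{P}(\mathcal{C}_{\text{ER}})$ into a shape from which the two pieces of $\mathbb{P}_{\text{low}}$ fall out after (i) discarding one redundant constraint, (ii) splitting on the value of $E_d$, and (iii) replacing the remaining non-separable joint event by separable sub-events via elementary inclusions. Write $E_{b+r}:=E_b+E_r$; because $s_d,s_b,s_r$ are mutually independent, so are $E_d,E_b,E_r$, and in particular $E_d$ is independent of $E_{b+r}$. First I would observe that on $\mathcal{C}_{\text{ER}}$ one has $E_b\le E_{b+r}\le E_d\le\mathsf{E_B^{(m)}}$, so the constraint $E_b\le\mathsf{E_B^{(m)}}$ is implied by the others, giving $\mathbb{P}_{\text{ER}}=\mathbb{P}(E_{b+r}\le E_d\le\mathsf{E_B^{(m)}},\ E_r\le\mathsf{E_R^{(m)}})$.

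Next, using the natural ordering $\mathsf{E_R^{(m)}}\le\mathsf{E_B^{(m)}}$ — otherwise replace $\mathsf{E_R^{(m)}}$ by $\min(\mathsf{E_R^{(m)}},\mathsf{E_B^{(m)}})$ in what follows — I would split $\mathbb{P}_{\text{ER}}=P_A+P_B$ according to whether $E_d\le\mathsf{E_R^{(m)}}$ or $E_d>\mathsf{E_R^{(m)}}$. On the first event both remaining caps are automatic ($E_d\le\mathsf{E_R^{(m)}}\le\mathsf{E_B^{(m)}}$ and $E_r\le E_{b+r}\le E_d\le\mathsf{E_R^{(m)}}$), so $P_A=\mathbb{P}(E_{b+r}\le E_d\le\mathsf{E_R^{(m)}})$; on the second only $E_d\le\mathsf{E_B^{(m)}}$ survives, so $P_B=\mathbb{P}(E_{b+r}\le E_d,\ E_r\le\mathsf{E_R^{(m)}},\ \mathsf{E_R^{(m)}}<E_d\le\mathsf{E_B^{(m)}})$. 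Intersecting $P_B$'s event further with $\{E_{b+r}\le\mathsf{E_R^{(m)}}\}$ makes both $E_r\le\mathsf{E_R^{(m)}}$ and $E_{b+r}\le E_d$ automatic, so $P_B\ge\mathbb{P}(E_{b+r}\le\mathsf{E_R^{(m)}},\ \mathsf{E_R^{(m)}}<E_d\le\mathsf{E_B^{(m)}})$, which by independence of $E_d$ and $E_{b+r}$ equals $\mathbb{P}_{\text{low}}^{(2)}$.

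For $P_A$, I would write $P_A=\mathbb{P}(E_{b+r}\le E_d)-\mathbb{P}(E_{b+r}\le E_d,\ E_d>\mathsf{E_R^{(m)}})$ and bound the subtracted term by a union bound over $E_d\in(\mathsf{E_R^{(m)}},\mathsf{E_B^{(m)}}+\mathsf{E_R^{(m)}}]$ versus $E_d>\mathsf{E_B^{(m)}}+\mathsf{E_R^{(m)}}$: on the first range $E_{b+r}\le E_d$ forces $E_{b+r}\le\mathsf{E_B^{(m)}}+\mathsf{E_R^{(m)}}$, so by independence the contribution is at most $\mathbb{P}(E_{b+r}\le\mathsf{E_B^{(m)}}+\mathsf{E_R^{(m)}})\,\mathbb{P}(\mathsf{E_R^{(m)}}\le E_d\le\mathsf{E_B^{(m)}}+\mathsf{E_R^{(m)}})$, and the second range contributes at most $\mathbb{P}(\mathsf{E_B^{(m)}}+\mathsf{E_R^{(m)}}\le E_d)$. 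Combining with $P_A\ge0$ yields $P_A\ge\mathbb{P}_{\text{low}}^{(1)}$ exactly as written, the outer $\max(0,\cdot)$ encoding nothing more than nonnegativity. Adding the two bounds gives $\mathbb{P}_{\text{ER}}=P_A+P_B\ge\mathbb{P}_{\text{low}}^{(1)}+\mathbb{P}_{\text{low}}^{(2)}=\mathbb{P}_{\text{low}}$. To finish I would note that every quantity in $\mathbb{P}_{\text{low}}^{(1)}$ and $\mathbb{P}_{\text{low}}^{(2)}$ is either a marginal probability of the single log-normal $E_d$, the CDF of the sum $E_{b+r}$ of two independent log-normals at a deterministic threshold, or the scalar $\mathbb{P}(E_{b+r}\le E_d)$ — each admitting the standard Fenton--Wilkinson log-normal approximation used elsewhere in the paper — so $\mathbb{P}_{\text{low}}$ is genuinely closed-form, unlike the triple integral of Appendix~A that defines $\mathbb{P}_{\text{ER}}$.

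The computation itself is elementary — a chain of set inclusions, one union bound, and independence — so the real obstacle is choosing the cut points so the bound is not wasteful. The leftover event $\mathbb{P}(E_{b+r}\le E_d,\ E_d>\mathsf{E_R^{(m)}})$ must be split precisely at $\mathsf{E_B^{(m)}}+\mathsf{E_R^{(m)}}$ — the largest value $E_{b+r}$ can take while still obeying $E_b\le\mathsf{E_B^{(m)}}$ and $E_r\le\mathsf{E_R^{(m)}}$ — because cutting elsewhere either discards probability mass needlessly (loosening the bound) or reintroduces the very non-separable coupling between $E_{b+r}$ and $E_d$ that the lemma is meant to circumvent.
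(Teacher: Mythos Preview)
Your proof is correct and follows essentially the same route as the paper's Appendix~A: the same decomposition $\mathbb{P}_{\text{ER}}=P_A+P_B$ (the paper's $\mathbb{P}^{(1)}+\mathbb{P}^{(2)}$) obtained by splitting on $E_d\lessgtr\mathsf{E_R^{(m)}}$, the same cut point $\mathsf{E_B^{(m)}}+\mathsf{E_R^{(m)}}$ for bounding $P_A$, and the same Fenton--Wilkinson closure. Your treatment of $P_A$ is a little more streamlined --- the paper expands $\mathbb{P}(E_{b+r}\le E_d)$ into four pieces (a)--(d) and upper-bounds (b) and (d) separately, whereas you reach the identical bound in one union-bound step --- but the underlying inclusions are the same.
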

\begin{proof}
See Appendix \ref{app:lower_bound}.
\end{proof}
Also note that, $\mathbb{P}_{\text{low}}$ gives an upper-bound for $\mathbb{P}_{\text{ED}}$:
\begin{align}
\mathbb{P}_{\text{ED}} \leq \mathbb{P} & \left( 
E_{\lbrace d,b \rbrace} \leq \mathsf{E_B^{(m)}}
\; \cap \; E_r \leq \mathsf{E_R^{(m)}}
 \right)
 - \mathbb{P}_{\text{low}}.
 \label{eq:upper_bound_P_ED}
\end{align}

\begin{model}[\textbf{REA}]
The Relay Efficiency Area, as characterized in Definition \ref{def:area_probability}, is modeled in log-normal outdoors propagation environments by the pair $( \widehat{\mathcal{A}_{\mathsf{R}}}, \mathbb{P}_T)$, where
\begin{align*}
\widehat{\mathcal{A}_{\mathsf{R}}} = \left\lbrace
M \left(x,y\right) \quad \text{s.t.} \right.
 \left.\mathbb{P}_T \leq \mathbb{P}_{\text{low}} \left(x,y\right) + \mathbb{P}_{\text{CR}}\left(x,y\right)
\right\rbrace .
\end{align*}
$\mathbb{P}_{\text{low}}$ and $\mathbb{P}_{\text{CR}}$ are given by Lemma \ref{lemma:P_low} and Definition \ref{def:CR_condition} respectively.
\label{model:area_prob_model}
\end{model}
Any mobile user $M \left(x,y\right)$ circulating within $(\widehat{\mathcal{A}_{\mathsf{R}}},\mathbb{P}_T)$ is also within $({\mathcal{A}_{\mathsf{R}}},\mathbb{P}_T)$, such that it has at least the probability $\mathbb{P}_T$ to save energy via relaying.

\subsection{Accounting for the circuitry consumption}

Up to now, we have focused on the case $\mathsf{E^{(dsp)}} = 0$, i.e. accounting only for the transmit energy. These results, however, can be generalized to show  the overall energy consumption. In this case, data is relayed if the overall energy consumed using RTx is less than the consumption using DTx. As defined in Section \ref{sec:coding_scheme}, the overall consumption includes the RF amplifier efficiency (given by $\eta_R$ and $\eta_B$) and the additional energy dissipated for decoding and re-encoding at the relay station (given by $\mathsf{E^{(dsp)}}$). The energy-efficient condition for relaying becomes:
\begin{align*}
\mathcal{C}_{\text{ER}}^\circ = 
& \; \left \lbrace 
E_{b}^\circ + E_{r}^\circ + \mathsf{E^{(dsp)}} \leq E_{d}^\circ \;
\;\; \cap \;\; 
E^\circ_{\left\lbrace d,b\right\rbrace} \leq \eta_B \mathsf{E_B^{(m)}}  
\;\; \cap \;\;
\right. \\ & \; \left.
 E^\circ_{r} \leq \eta_R \mathsf{E_R^{(m)}}
\right\rbrace 
\; \text{with }\; \; 
E_{k}^\circ \sim \log\mathcal{N} \left(\mu_k + \ln \left(\eta_k \right), \sigma_k^2\right)
\end{align*} 
\begin{model}[\textbf{REA, extended}]
The REA accounting for the overall energy consumption is modelled by $(\widehat{\mathcal{A}^{\circ}_{\mathsf{R}}}, \mathbb{P}_T)$, with
\begin{align*}
\widehat{\mathcal{A}^{\circ}_{\mathsf{R}}} = \left\lbrace
M \left(x,y\right) \quad \text{s.t.} \quad \mathbb{P}_T \leq \mathbb{P}_{\text{low}}^{\circ} \left(x,y\right) + \mathbb{P}_{\text{CR}}^{\circ}\left(x,y\right)
\right\rbrace .
\end{align*}
$\mathbb{P}_{\text{low}}^{\circ}$ and $\mathbb{P}_{\text{CR}}^{\circ}$ are given by Lemma \ref{lemma:P_low} and Definition \ref{def:CR_condition} respectively, but with the following replacement:
\begin{align}
\begin{array}{r l}
E_k \longrightarrow & E_{k}^\circ, \quad k \in \left\lbrace d,b,r \right\rbrace 
\\
E_{b+r} \longrightarrow & E_{b}^\circ + E_{r}^\circ + \mathsf{E^{(dsp)}}
\\
\mathsf{E_B^{(m)}} \longrightarrow & \eta_B \mathsf{E_B^{(m)}} 
\\
\mathsf{E_R^{(m)}} \longrightarrow & \eta_R \mathsf{E_R^{(m)}} + \mathsf{E^{(dsp)}}
\end{array}
\label{eq:model_extension_subs}
\end{align}
\end{model}
This model is illustrated in Figure \ref{fig:REA_prob} for $\mathsf{E_{2Hop}^{(dsp)}} =$ 50mJ.

\subsection{Discussion}
\label{sec:discussion_prob}

%
%

\subsubsection{Model validity}

The proposed model has been validated under several outdoors environment settings and for both uplink and downlink transmissions. For the uplink, we generally have $\mathsf{E_U^{(m)}} \leq \mathsf{E_R^{(m)}}$ for the user and relay energy constraints, implying that $\mathbb{P}_{\text{low}}^{(2)} =0$.
This model can be also used for any DF schemes for which the consumed energies $E_d$, $E_r$ and $E_b$ are log-normally distributed, such as the repetition-coded full DF scheme in \cite{laneman2004} where the user decodes data using maximum ratio combining on the signal received from the BS and RS during both phases.

%


\subsubsection{Model utilization}
The primary use of the REA is to compute the spatial distribution of transmit energy consumption, as described in the next section. Thanks to the proposed model, many relay configurations and propagation environments can be analyzed in a reasonable time, which even allows finding the optimal relay location by exhaustive search.
Moreover, this model can help decide between DTx and RTx when only statistics of the channel realizations are known at the transmitters (partial CSIT). In this case, the proposed model specifies for each user location the path which has the highest probability to save energy.

Next, the concept of Efficiency Area has wide application since it is based on the network geometry and ensures a minimum performance. Such a framework allows the deployment of relay stations in an efficient manner, simply by locating relays such that hotspots or cell regions with poor performance are included within the corresponding Efficiency Area. This concept can be particularly applicable to non-uniform random user locations, e.g. a hotspot-type distribution. More specifically, the probability for energy-efficient relaying offers valuable support for network resource management, relay selection, load-balancing, scheduling or BS switch off. For example, switch-off can be decided if the probability to reduce energy consumption via relaying is above a given threshold $\mathbb{P}_T$, i.e. if the considered cell area is included in $(\mathcal{A}_{\mathsf{R}},\mathbb{P}_T)$.



\section{Energy Consumption and Energy Efficiency Area with shadowing}
\label{sec:EEA}

\begin{figure*}[!t]
\setcounter{MYtempeqncnt}{\value{equation}}
\setcounter{equation}{15}
{\small
\begin{align}
\mathsf{E}_{\text{up}}^{\text{(ED)}} = & \;
g(d,\mathsf{E_B^{(m)}}) \mathbb{P} \left( E_d \leq \mathsf{E_B^{(m)}} \right)
\mathbb{P} \left( E_b \leq \mathsf{E_B^{(m)}} \; \cap \; E_r \leq \mathsf{E_R^{(m)}} \right)
 - \exp \left( \mu_{d} + \frac{\sigma_{d}^2}{2} \right) \mathbb{P}_{\text{low}}^{(1,\text{ED})}
 \nonumber \\ &
 - \exp \left( \mu_d + \frac{\sigma_d^2}{2} \right)
\frac{\Phi \left(- \sigma_d + \frac{\ln \left(\mathsf{E_R^{(m)}} \right) - \mu_d}{\sigma_d}  \right)-\Phi \left(- \sigma_d + \frac{\ln \left(\mathsf{E_B^{(m)}} \right) - \mu_d}{\sigma_d}  \right)}
{\Phi \left(\frac{\ln \left(\mathsf{E_B^{(m)}} \right) - \mu_d}{\sigma_d} \right)-\Phi \left(\frac{\ln \left(\mathsf{E_R^{(m)}} \right) - \mu_d}{\sigma_d} \right)}
\mathbb{P}_{\text{low}}^{(2)}
\label{eq: E_ED}
\end{align}
}
\hrulefill
\setcounter{equation}{\value{MYtempeqncnt}}
\end{figure*}

We analyze the spatial distribution of the energy consumption within the cell and define the Energy Efficiency Area (EEA) as the cell area for which the average transmit energy consumption does not exceed $\mathsf{E}_T$. 
The EEA is related to the maximum energy necessary to transmit data at a given rate. Such a metric is relevant for performance analysis and encompasses fairness between the served users. Indeed, a relay station cannot be energy-efficient for users located in a certain area of the cell only, while showing unacceptably high energy consumption elsewhere.



\subsection{Definition of the EEA}

The average transmit energy, denoted as $\mathbb{E} \left[E^{\mathsf{(RF)}}\right]$, consumed to send data to a given user $M \left(x,y\right)$, is equal to the sum of the energy consumed by BS only when DTx is used, and by both BS and RS when RTx is used. It is expressed as follows, with $E_{b+r} = E_b + E_r$:
\begin{align}
\mathbb{E} \left[E^{\mathsf{(RF)}}\right] &=
	\mathbb{P}_{\text{CR}} \, \mathbb{E}\left[E_{b+r} \, \vert \,  \mathcal{C}_{\text{CR}}\right]
	+ \mathbb{P}_{\text{ER}} \, \mathbb{E}\left[E_{b+r} \, \vert \, \mathcal{C}_{\text{ER}}\right]
	\nonumber  \\ 	&
	+ \mathbb{P}_{\text{CD}} \, \mathbb{E}\left[E_d \, \vert \, \mathcal{C}_{\text{CD}}\right]
	+ \mathbb{P}_{\text{ED}} \, \mathbb{E}\left[E_d \, \vert \, \mathcal{C}_{\text{ED}}\right].
	\label{eq:average_energy}
\end{align}

\begin{definition} 
The Energy Efficiency Area (EEA) of a network in shadowing environment defines the energy range across the cell. It is characterized by the pair $(\mathcal{A}_{\mathsf{E}},\mathsf{E}_T)$: the average transmit energy consumed by any mobile user $M \left(x,y\right)$ within $\mathcal{A}_{\mathsf{E}}$ does not exceed $\mathsf{E}_T$, i.e.
\begin{align*}
\mathcal{A}_{\mathsf{E}} = \left\lbrace
M \left(x,y\right) \quad \text{s.t.} \quad \mathbb{E} \left[E^{\mathsf{(RF)}}\right] \leq \mathsf{E}_T
\right\rbrace.
\end{align*} 
\label{def:area_energy}
\end{definition}
\vspace*{-15pt}
Figure \ref{fig:energy_consumption} illustrates $\mathcal{A}_{\mathsf{E}}$ and its corresponding model for various values of $\mathsf{E}_T$. Also note that the areas for relaying probability $\mathcal{A}_{\mathsf{R}}$ and for energy $\mathcal{A}_{\mathsf{E}}$ are distinct. Next, we describe the proposed model for the EEA.

\begin{figure}
\centering
\centering \resizebox{0.75\columnwidth}{!}{%

\begin{tikzpicture}[scale=4,cap=round,>=latex]

 \pgfplotsset{
    grid style = {
      dash pattern = on 0.05mm off 1mm,
      line cap = round,
      black,
      line width = 0.5pt
    }
  }

  \begin{axis}[%
	width=0.92\textwidth,
    height=\textwidth,
    	xmin=-550, xmax=920, ymin=-850, ymax=750,
	every axis/.append style={font=\large},  
    grid=major,%
     legend style={at={(axis cs:-530,-830)},anchor=south west, nodes=right, font=\huge},%
    mark size=2.0pt]

\addplot[solid, color=black, line width=1.5]  table[x=xuser ,y=yuser ,col sep=semicolon] {data_EEA_figure_000mW_01.txt};
   \addlegendentry{Simulated $\mathbb{E} \left[E^{\mathsf{(RF)}}\right]$}
     \addplot[dashed, color=black, line width=1.5]  table[x=xuser ,y=yuser ,col sep=semicolon] {data_EEA_figure_000mW_01model.txt};
 \addlegendentry{Proposed model}
     
%
%
%
%
%
%
%
%

    \addplot[solid, color=Plum, line width=1.5]  table[x=xuser ,y=yuser ,col sep=semicolon] {data_EEA_figure_000mW_0075.txt};
    \addplot[solid, color=Plum, line width=1.5]  table[x=xuser ,y=yuser2 ,col sep=semicolon] {data_EEA_figure_000mW_0075.txt};
        
     \addplot[dashed, color=Plum, line width=1.5]  table[x=xuser ,y=yuser ,col sep=semicolon] {data_EEA_figure_000mW_0075model.txt};
    \addplot[dashed, color=Plum, line width=1.5]  table[x=xuser ,y=yuser2 ,col sep=semicolon] {data_EEA_figure_000mW_0075model.txt};

    \addplot[solid, color=blue, line width=1.5]  table[x=xuser ,y=yuser ,col sep=semicolon] {data_EEA_figure_000mW_01.txt};
    \addplot[solid, color=blue, line width=1.5]  table[x=xuser ,y=yuser2 ,col sep=semicolon] {data_EEA_figure_000mW_01.txt};
        
     \addplot[dashed, color=blue, line width=1.5]  table[x=xuser ,y=yuser ,col sep=semicolon] {data_EEA_figure_000mW_01model.txt};
    \addplot[dashed, color=blue, line width=1.5]  table[x=xuser ,y=yuser2 ,col sep=semicolon] {data_EEA_figure_000mW_01model.txt};

      \addplot[solid, color=ProcessBlue, line width=1.5]  table[x=xuser ,y=yuser ,col sep=semicolon] {data_EEA_figure_000mW_02.txt};
    \addplot[solid, color=ProcessBlue, line width=1.5]  table[x=xuser ,y=yuser2 ,col sep=semicolon] {data_EEA_figure_000mW_02.txt};
        
     \addplot[dashed, color=ProcessBlue, line width=1.5]  table[x=xuser ,y=yuser ,col sep=semicolon] {data_EEA_figure_000mW_02model.txt};
    \addplot[dashed, color=ProcessBlue, line width=1.5]  table[x=xuser ,y=yuser2 ,col sep=semicolon] {data_EEA_figure_000mW_02model.txt};

    \addplot[solid, color=OliveGreen, line width=1.5]  table[x=xuser ,y=yuser ,col sep=semicolon] {data_EEA_figure_000mW_05.txt};
    \addplot[solid, color=OliveGreen, line width=1.5]  table[x=xuser ,y=yuser2 ,col sep=semicolon] {data_EEA_figure_000mW_05.txt};

	\addplot[dashed, color=OliveGreen, line width=1.5]  table[x=xuser ,y=yuser ,col sep=semicolon] {data_EEA_figure_000mW_05model.txt};  
    \addplot[dashed, color=OliveGreen, line width=1.5]  table[x=xuser ,y=yuser2 ,col sep=semicolon] {data_EEA_figure_000mW_05model.txt};

	\addplot[solid, color=Orange, line width=1.5]  table[x=xuser ,y=yuser ,col sep=semicolon] {data_EEA_figure_000mW_08.txt};
    \addplot[solid, color=Orange, line width=1.5]  table[x=xuser ,y=yuser2 ,col sep=semicolon] {data_EEA_figure_000mW_08.txt};

	\addplot[dashed, color=orange, line width=1.5]  table[x=xuser ,y=yuser ,col sep=semicolon] {data_EEA_figure_000mW_08model.txt};  
    \addplot[dashed, color=orange, line width=1.5]  table[x=xuser ,y=yuser2 ,col sep=semicolon] {data_EEA_figure_000mW_08model.txt};

    \addplot[solid, color=red, line width=1.5]  table[x=xuser ,y=yuser ,col sep=semicolon] {data_EEA_figure_000mW_1.txt};
    \addplot[solid, color=red, line width=1.5]  table[x=xuser ,y=yuser2 ,col sep=semicolon] {data_EEA_figure_000mW_1.txt};
    
    \addplot[dashed, color=red , line width=1.5]  table[x=xuser ,y=yuser ,col sep=semicolon] {data_EEA_figure_000mW_1model.txt};   
    \addplot[dashed, color=red, line width=1.5]  table[x=xuser ,y=yuser2 ,col sep=semicolon] {data_EEA_figure_000mW_1model.txt};

	\draw {(axis cs:400, 692.82) -- (axis cs:0,0) -- (axis cs:400,- 692.82)} [line width=1.25pt, loosely dotted];

	\addplot[line width=1pt, color=black]
    coordinates {
    (100, 692.82)(400, 692.82)(800,0)
	(400,-692.82)(100,- 692.82)
    };
        
     \draw {(axis cs:100, 692.82) -- (axis cs:0, 692.82)} [line width=1pt, loosely dotted];
     \draw {(axis cs:100,- 692.82) -- (axis cs:0,- 692.82)} [line width=1pt, loosely dotted];
    
    \coordinate (center) at (axis cs:0,0);
  \coordinate (bs1) at (axis cs:800,0);
	\coordinate (rs1) at (axis cs:200,0);
%
	\draw (bs1) node[diamond, draw, fill = black,scale=1] {};
%
	\draw (rs1) node[circle, draw, fill = black,scale=1] {};
	
  \draw (axis cs:850,50) node {\huge BS$_1$};
  \draw (axis cs:220,50) node {\huge Relay};	
  \draw (axis cs:400,-700) node [anchor=north west] {\huge \textit{Cell sector}};

  \draw (axis cs:450,650) node [anchor=west] {\LARGE $\mathbb{E} \left[E^{\mathsf{(RF)}}\right]  \leq 1$J};
  \draw (axis cs:510,530) node [anchor=west] {\LARGE $\leq 0.8$};
  \draw (axis cs:580,400) node [anchor=west] {\LARGE $\leq 0.5$};
  \draw (axis cs:650,270) node [anchor=west] {\LARGE $\leq 0.2$};
  \draw (axis cs:400,200) node [anchor=west] {\LARGE $\leq 0.1$};
    \draw (axis cs:530,100) node [anchor=west] {\LARGE $\leq 0.075$};

    \end{axis}

\end{tikzpicture}
}
\caption{Energy Efficiency Area: simulation and model (Two-hop relaying, $N_r=1$, $\mathsf{D}_r =700$m)}
\label{fig:energy_consumption}
\end{figure}
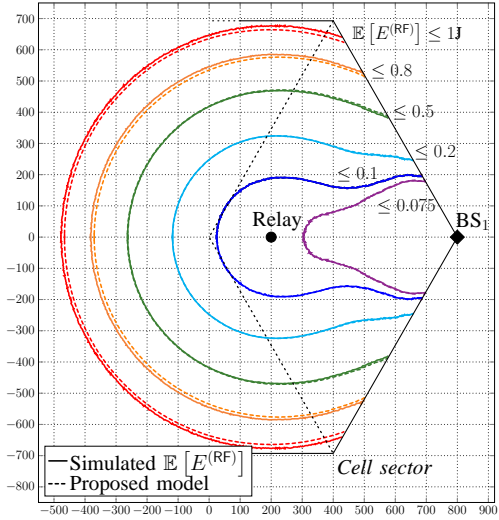

\vspace*{-10pt}
\subsection{Analysis of the EEA with shadowing}

We focus on the computation of $\mathbb{E} \left[E^{\mathsf{(RF)}}\right]$ in Eq. \eqref{eq:average_energy} for log-normal shadowing environments. For the coverage conditions CD and CR, both expectations are computed in closed-form:
\begin{align}
& \left\lbrace
\begin{array}{l}
\mathbb{E}\left[E_{s+r} \, \vert \, \mathcal{C}_{\text{CR}}\right] = 
\underset{j \in \left\lbrace r,b \right\rbrace}{\sum} g(j,\mathsf{E_j^{(m)}}) 
\\
\mathbb{E}\left[E_{d} \, \vert \, \mathcal{C}_{\text{CD}}\right] = g(d,\mathsf{E_b^{(m)}})
\end{array}
\right. 
\label{eq:E_CX} 
\end{align}
where the function $g$ is given by:
\begin{align}
g(k,\mathsf{E}) &= \exp \left( \mu_k + \frac{\sigma_k^2}{2} \right)
\frac{\Phi \left(- \sigma_k + \frac{\ln \left(\mathsf{E} \right) - \mu_k}{\sigma_k}  \right)}
{\Phi \left(\frac{\ln \left(\mathsf{E} \right) - \mu_k}{\sigma_k} \right)} .
\label{eq:g}
\end{align}
On the contrary, the conditional expectations  $ \mathbb{E}\left[E_d \, \vert \, \mathcal{C}_{\text{ED}}\right]$ and $ \mathbb{E}\left[E_s + E_r \, \vert \, \mathcal{C}_{\text{ER}}\right]$ for energy efficiency can only be expressed in an integral form. We thus propose to bound both of them, as done for the probabilities $\mathbb{P}_{\text{ER}}$ and $\mathbb{P}_{\text{ED}}$ in the previous section.

\begin{lemma}
When RTx is more energy-efficient than DTx, the average consumed energy $\mathbb{P}_{\text{ER}} \mathbb{E} \left[E_{b+r} \, \vert \, \mathcal{C}_{\text{ER}}\right] $ in log-normal shadowing environments is lower-bounded by $\mathsf{E}_{\text{low}}^{\text{(ER)}}$, where
\begin{align*}
\mathsf{E}_{\text{low}}^{\text{(ER)}} = 
\exp \left( \mu_{b+r} + \frac{\sigma_{b+r}^2}{2} \right)\mathbb{P}_{\text{low}}^{(1,\text{ER})}
+ g(b+r,\mathsf{E_R^{(m)}}) \mathbb{P}_{\text{low}}^{(2)}.
\end{align*}
Here, $\mathbb{P}_{\text{low}}^{(1,\text{ER})}$ is computed similarly to $\mathbb{P}_{\text{low}}^{(1)}$ but considering the scaled distribution $\exp(\sigma_{b+r}^2)E_{b+r}$ rather than $E_{b+r}$. We recall that $\mathbb{P}_{\text{low}}^{(1)}$ and $\mathbb{P}_{\text{low}}^{(2)}$ are given by Eq. \eqref{eq:Prob_low1} and \eqref{eq:Prob_low2} respectively.
\label{lemma:E_ER}
\end{lemma}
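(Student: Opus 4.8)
The plan is to re‑run the argument behind Lemma~\ref{lemma:P_low} one moment higher, replacing ``probability of a sub‑event of $\mathcal{C}_{\text{ER}}$'' by ``$E_{b+r}$‑weighted probability of that sub‑event.'' Start from $\mathbb{P}_{\text{ER}}\,\mathbb{E}\!\left[E_{b+r}\mid\mathcal{C}_{\text{ER}}\right]=\mathbb{E}\!\left[E_{b+r}\mathbf{1}_{\mathcal{C}_{\text{ER}}}\right]$. As in the proof of Lemma~\ref{lemma:P_low}, the bound $\mathbb{P}_{\text{low}}$ is built from two essentially disjoint sub‑events contained in $\mathcal{C}_{\text{ER}}$, namely $\mathcal{C}_1:=\{E_{b+r}\le E_d\le\mathsf{E_R^{(m)}}\}$, whose probability is lower‑bounded by $\mathbb{P}_{\text{low}}^{(1)}$, and $\mathcal{C}_2:=\{\mathsf{E_R^{(m)}}\le E_d\le\mathsf{E_B^{(m)}}\}\cap\{E_{b+r}\le\mathsf{E_R^{(m)}}\}$, whose probability is exactly $\mathbb{P}_{\text{low}}^{(2)}$. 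The individual budget constraints $E_b\le\mathsf{E_B^{(m)}}$ and $E_r\le\mathsf{E_R^{(m)}}$ in $\mathcal{C}_{\text{ER}}$ are automatic on $\mathcal{C}_1\cup\mathcal{C}_2$ because $E_b,E_r\le E_{b+r}$ together with the ordering $\mathsf{E_R^{(m)}}\le\mathsf{E_B^{(m)}}$ used in Lemma~\ref{lemma:P_low}; hence both events can be phrased purely in terms of $E_{b+r}$ and $E_d$. Since $E_{b+r}\ge0$ and the two events are essentially disjoint, $\mathbb{E}\!\left[E_{b+r}\mathbf{1}_{\mathcal{C}_{\text{ER}}}\right]\ge\mathbb{E}\!\left[E_{b+r}\mathbf{1}_{\mathcal{C}_1}\right]+\mathbb{E}\!\left[E_{b+r}\mathbf{1}_{\mathcal{C}_2}\right]$, and it remains to handle the two terms.

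For the $\mathcal{C}_2$ term I would use that $E_d$ is independent of $(E_b,E_r)$, and therefore of $E_{b+r}$, to factor $\mathbb{E}\!\left[E_{b+r}\mathbf{1}_{\mathcal{C}_2}\right]=\mathbb{P}\!\left(\mathsf{E_R^{(m)}}\le E_d\le\mathsf{E_B^{(m)}}\right)\mathbb{E}\!\left[E_{b+r}\mathbf{1}_{\{E_{b+r}\le\mathsf{E_R^{(m)}}\}}\right]$. Treating $E_{b+r}$ as log‑normal with parameters $(\mu_{b+r},\sigma_{b+r}^2)$, the standard truncated‑log‑normal first‑moment identity gives $\mathbb{E}\!\left[E_{b+r}\mathbf{1}_{\{E_{b+r}\le\mathsf{E_R^{(m)}}\}}\right]=g(b+r,\mathsf{E_R^{(m)}})\,\mathbb{P}(E_{b+r}\le\mathsf{E_R^{(m)}})$ with $g$ as in Eq.~\eqref{eq:g}; multiplying back the $E_d$ factor reproduces $g(b+r,\mathsf{E_R^{(m)}})\,\mathbb{P}_{\text{low}}^{(2)}$, so this term is in fact an equality.

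For the $\mathcal{C}_1$ term the key device is the exponential‑tilting identity for the log‑normal: if $X\sim\log\mathcal{N}(\mu,\sigma^2)$ is independent of the remaining randomness, then for any event $A$ one has $\mathbb{E}[X\mathbf{1}_A]=e^{\mu+\sigma^2/2}\,\widetilde{\mathbb{P}}(A)$, where $\widetilde{\mathbb{P}}$ replaces $X$ by $e^{\sigma^2}X\sim\log\mathcal{N}(\mu+\sigma^2,\sigma^2)$ (this follows by completing the square in the Gaussian exponent after conditioning on everything else). Applying it with $X=E_{b+r}$ and $A=\mathcal{C}_1$ yields $\mathbb{E}\!\left[E_{b+r}\mathbf{1}_{\mathcal{C}_1}\right]=e^{\mu_{b+r}+\sigma_{b+r}^2/2}\,\widetilde{\mathbb{P}}\!\left(\exp(\sigma_{b+r}^2)E_{b+r}\le E_d\le\mathsf{E_R^{(m)}}\right)$. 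The inequality $\mathbb{P}_{\text{low}}^{(1)}\le\mathbb{P}(E_{b+r}\le E_d\le\mathsf{E_R^{(m)}})$ proved in Appendix~\ref{app:lower_bound} holds for \emph{any} log‑normal playing the role of $E_{b+r}$ and independent of $E_d$; invoking it for the scaled variable $\exp(\sigma_{b+r}^2)E_{b+r}$ lower‑bounds the tilted probability by $\mathbb{P}_{\text{low}}^{(1,\text{ER})}$, which is precisely $\mathbb{P}_{\text{low}}^{(1)}$ with $E_{b+r}$ replaced by its scaled version. Hence $\mathbb{E}\!\left[E_{b+r}\mathbf{1}_{\mathcal{C}_1}\right]\ge e^{\mu_{b+r}+\sigma_{b+r}^2/2}\,\mathbb{P}_{\text{low}}^{(1,\text{ER})}$, and adding the two contributions gives $\mathbb{P}_{\text{ER}}\,\mathbb{E}\!\left[E_{b+r}\mid\mathcal{C}_{\text{ER}}\right]\ge\mathsf{E}_{\text{low}}^{\text{(ER)}}$.

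I expect the main obstacle to be the bookkeeping around the sum $E_{b+r}$: it is not exactly log‑normal, so all the identities are applied to its log‑normal surrogate $(\mu_{b+r},\sigma_{b+r}^2)$, and one must verify that every step in Appendix~\ref{app:lower_bound} that used the individual constraints on $E_b$ and $E_r$ can be re‑expressed through $E_{b+r}$ and $E_d$ alone, so that the tilting — which only makes sense for the surrogate — is legitimate and $\mathcal{C}_1,\mathcal{C}_2\subseteq\mathcal{C}_{\text{ER}}$ still holds. A secondary, purely computational point is confirming the truncated‑log‑normal moment identity that produces $g$ and that the mean shift $\mu_{b+r}\mapsto\mu_{b+r}+\sigma_{b+r}^2$ matches exactly the prefactor $e^{\mu_{b+r}+\sigma_{b+r}^2/2}$ and the factor $\exp(\sigma_{b+r}^2)$ appearing in $\mathbb{P}_{\text{low}}^{(1,\text{ER})}$.
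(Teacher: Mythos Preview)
Your proposal is correct and follows essentially the same route as the paper's proof in Appendix~\ref{app:E_ER}: the paper also splits $\mathbb{P}_{\text{ER}}\,\mathbb{E}[E_{b+r}\mid\mathcal{C}_{\text{ER}}]$ into the two pieces $E_1$ and $E_2$ corresponding to your $\mathcal{C}_1$ and $\mathcal{C}_2$, handles $E_2$ by independence and the truncated log-normal mean to obtain $g(b+r,\mathsf{E_R^{(m)}})\,\mathbb{P}_{\text{low}}^{(2)}$, and for $E_1$ writes the double integral, evaluates the inner one, and recognises the result as $\exp(\mu_{b+r}+\sigma_{b+r}^2/2)\,\mathbb{P}\bigl(\exp(\sigma_{b+r}^2)E_{b+r}\le E_d\le\mathsf{E_R^{(m)}}\bigr)$ before invoking the bound of Lemma~\ref{lemma:P_low} with the scaled variable. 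Your exponential-tilting formulation is simply a more compact way of stating that same integral identity, and your caveat about the Fenton--Wilkinson surrogate for $E_{b+r}$ is exactly the approximation the paper uses throughout.
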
 
\begin{proof}
See Appendix \ref{app:E_ER}.
\end{proof}

\begin{lemma}
When DTx is more energy-efficient than RTx, the average energy consumption is equal to $\mathbb{P}_{\text{ED}} \mathbb{E} \left[E_{d} \, \vert \, \mathcal{C}_{\text{ED}}\right] $ and is upper-bounded by Eq. \eqref{eq: E_ED} at the top of page.
\addtocounter{equation}{1}
$\mathbb{P}_{\text{low}}^{(1,\text{ED})}$ is computed similarly to $\mathbb{P}_{\text{low}}^{(1)}$ but with the scaled distribution $\exp(\sigma_{d}^2)E_{d}$ instead of $E_d$.
\label{lemma:E_ED}
\end{lemma}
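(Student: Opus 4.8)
The plan is to reproduce the scheme behind Lemma~\ref{lemma:P_low} and Lemma~\ref{lemma:E_ER}, now with the relevant events weighted by $E_d$. The quantity to bound is $\mathbb{P}_{\text{ED}}\,\mathbb{E}[E_d\mid\mathcal{C}_{\text{ED}}]=\mathbb{E}[E_d\,\mathbf{1}_{\mathcal{C}_{\text{ED}}}]$. By Definition~\ref{def:ER_condition} and the expression of $\mathbb{P}_{\text{ED}}$, the event $\mathcal{C}_{\text{ED}}$ equals the joint feasibility event $\mathcal{F}:=\{E_d\le\mathsf{E_B^{(m)}}\}\cap\{E_b\le\mathsf{E_B^{(m)}}\}\cap\{E_r\le\mathsf{E_R^{(m)}}\}$ with the sub-event $\mathcal{C}_{\text{ER}}\subseteq\mathcal{F}$ removed, hence
\begin{align*}
\mathbb{E}[E_d\,\mathbf{1}_{\mathcal{C}_{\text{ED}}}] = \mathbb{E}[E_d\,\mathbf{1}_{\mathcal{F}}] - \mathbb{E}[E_d\,\mathbf{1}_{\mathcal{C}_{\text{ER}}}].
\end{align*}
So I would (i) evaluate $\mathbb{E}[E_d\,\mathbf{1}_{\mathcal{F}}]$ in closed form and (ii) lower-bound $\mathbb{E}[E_d\,\mathbf{1}_{\mathcal{C}_{\text{ER}}}]$; an upper bound on the left side then follows.

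For step (i), I would use that the shadowing coefficients are mutually independent, so $E_d$ is independent of $(E_b,E_r)$ and $\mathbb{E}[E_d\,\mathbf{1}_{\mathcal{F}}]=\mathbb{E}[E_d\,\mathbf{1}_{\{E_d\le\mathsf{E_B^{(m)}}\}}]\,\mathbb{P}(E_b\le\mathsf{E_B^{(m)}}\cap E_r\le\mathsf{E_R^{(m)}})$. Using the log-normal truncated first-moment identity $\mathbb{E}[X\,\mathbf{1}_{\{X\le K\}}]=\exp(\mu+\sigma^2/2)\,\Phi(-\sigma+(\ln K-\mu)/\sigma)$ for $X\sim\log\mathcal{N}(\mu,\sigma^2)$, together with $\mathbb{P}(E_d\le\mathsf{E_B^{(m)}})=\Phi((\ln\mathsf{E_B^{(m)}}-\mu_d)/\sigma_d)$, the factor $\mathbb{E}[E_d\,\mathbf{1}_{\{E_d\le\mathsf{E_B^{(m)}}\}}]$ rewrites as $g(d,\mathsf{E_B^{(m)}})\,\mathbb{P}(E_d\le\mathsf{E_B^{(m)}})$ with $g$ as in~\eqref{eq:g}. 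This gives the first line of~\eqref{eq: E_ED}.

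For step (ii), I would re-run the proof of Lemma~\ref{lemma:P_low} (Appendix~\ref{app:lower_bound}), which lower-bounds $\mathbb{P}(\mathcal{C}_{\text{ER}})$ by restricting $\mathcal{C}_{\text{ER}}$ to disjoint sub-events of probability $\mathbb{P}_{\text{low}}^{(1)}$ --- the part still coupling $E_d$ and $E_{b+r}$, obtained through a union bound, which is why the $\max(0,\cdot)$ appears --- and $\mathbb{P}_{\text{low}}^{(2)}$ --- the part on which $E_d$ is confined to the deterministic window $[\mathsf{E_R^{(m)}},\mathsf{E_B^{(m)}}]$ while $E_{b+r}\le\mathsf{E_R^{(m)}}$ independently. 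Since $E_d\ge0$, the same restriction gives a lower bound on $\mathbb{E}[E_d\,\mathbf{1}_{\mathcal{C}_{\text{ER}}}]$. On the coupled part I would apply the Esscher-type change of measure $\mathbb{E}[E_d\,\mathbf{1}_A]=\exp(\mu_d+\sigma_d^2/2)\,\widetilde{\mathbb{P}}(A)$, valid for any event $A$ in $(E_d,E_{b+r})$, where under $\widetilde{\mathbb{P}}$ only the law of $E_d$ is shifted from $\log\mathcal{N}(\mu_d,\sigma_d^2)$ to $\log\mathcal{N}(\mu_d+\sigma_d^2,\sigma_d^2)$, i.e.\ $E_d$ is replaced by $\exp(\sigma_d^2)E_d$; this turns the restricted quantity into $\exp(\mu_d+\sigma_d^2/2)\,\mathbb{P}_{\text{low}}^{(1,\text{ED})}$. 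On the window part, $E_d$ and $E_{b+r}$ decouple, so $\mathbb{E}[E_d\,\mathbf{1}_{\{\mathsf{E_R^{(m)}}\le E_d\le\mathsf{E_B^{(m)}}\}}]\,\mathbb{P}(E_{b+r}\le\mathsf{E_R^{(m)}})$ is evaluated by writing the truncated moment as the conditional mean of $E_d$ on $[\mathsf{E_R^{(m)}},\mathsf{E_B^{(m)}}]$ times $\mathbb{P}(\mathsf{E_R^{(m)}}\le E_d\le\mathsf{E_B^{(m)}})$ and absorbing the latter probability into $\mathbb{P}_{\text{low}}^{(2)}$; this produces the $\Phi$-ratio factor multiplying $\mathbb{P}_{\text{low}}^{(2)}$. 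Subtracting this lower bound from $\mathbb{E}[E_d\,\mathbf{1}_{\mathcal{F}}]$ yields $\mathsf{E}_{\text{up}}^{\text{(ED)}}$.

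The main obstacle I foresee is the bookkeeping in step (ii): faithfully carrying the window split and the constraint relaxations from the Lemma~\ref{lemma:P_low} proof while keeping the extra $E_d$ factor, tracking which $\Phi$-arguments acquire the $-\sigma_d$ shift under the tilt and which stay unshifted, and pinning down the signs and the ordering of $\mathsf{E_R^{(m)}}$ versus $\mathsf{E_B^{(m)}}$ inside the resulting $\Phi$-differences. One should also verify, exactly as for $E_{b+r}$ in Lemma~\ref{lemma:E_ER}, that the $\max(0,\cdot)$ in $\mathbb{P}_{\text{low}}^{(1)}$ survives the $E_d$-weighting --- it does, since $E_d\,\mathbf{1}_{(\cdot)}\ge0$ makes the weighted quantity nonnegative --- and that the log-normal approximation of $E_{b+r}=E_b+E_r$ enters only through $\mathbb{P}(E_{b+r}\le\mathsf{E_R^{(m)}})$ and $\mathbb{P}(E_{b+r}\le\exp(\sigma_d^2)E_d)$, so that no approximation beyond the one already present in $\mathbb{P}_{\text{low}}$ is introduced.
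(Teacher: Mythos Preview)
Your proposal is correct and follows essentially the same route as the paper, whose proof is the single line ``similar to that of Lemma~\ref{lemma:E_ER}'': you correctly unfold this by writing $\mathbb{E}[E_d\mathbf{1}_{\mathcal{C}_{\text{ED}}}]=\mathbb{E}[E_d\mathbf{1}_{\mathcal{F}}]-\mathbb{E}[E_d\mathbf{1}_{\mathcal{C}_{\text{ER}}}]$ (the $E_d$-weighted version of~\eqref{eq:Prob_ED}), evaluating the first term via the truncated log-normal moment to get the $g(d,\mathsf{E_B^{(m)}})$ line of~\eqref{eq: E_ED}, and lower-bounding the second term by replaying the $\mathbb{P}_{\text{low}}^{(1)}$/$\mathbb{P}_{\text{low}}^{(2)}$ split with the Esscher tilt on $E_d$ --- which is exactly the ``scaled distribution $\exp(\sigma_d^2)E_d$'' substitution the statement prescribes, mirroring Appendix~\ref{app:E_ER} with $E_d$ in place of $E_{b+r}$.
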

\begin{proof}
The proof is similar to that of Lemma \ref{lemma:E_ER}.
\end{proof}

We now apply Lemmas \ref{lemma:E_ER} and \ref{lemma:E_ED} to deduce a model for the RF energy consumption.
\begin{model}[\textbf{EEA}]
We model the Energy Efficiency Area (EEA) of a relay-aided network in shadowing environment by the pair $(\widehat{\mathcal{A}_{\mathsf{E}}},\mathsf{E}_T)$. The average transmit energy to any mobile user $M \left(x,y\right)$ within $\widehat{\mathcal{A}_{\mathsf{E}}}$ can be assumed to be below $\mathsf{E}_T$, i.e.
\begin{align*}
\widehat{\mathcal{A}_{\mathsf{E}}} = \left\lbrace
M \left(x,y\right) \quad \text{s.t.} \quad \mathbb{E} \left[\widehat{E^{\mathsf{(RF)}}}\right] \leq \mathsf{E}_T
\right\rbrace,
\end{align*}
where $\mathbb{E} \left[\widehat{E^{\mathsf{(RF)}}}\right]$ is given by Eq. \eqref{eq:average_energy} but with the bounds $\mathsf{E}_{\text{low}}^{\text{(ER)}}$ and $\mathsf{E}_{\text{up}}^{\text{(ED)}}$. 
\label{model:area_energy_model}
\end{model}
Simulations show that $\mathbb{E} \left[\widehat{E^{\mathsf{(RF)}}}\right]$ can be considered as a tight upper-bound for $\mathbb{E} \left[E^{\mathsf{(RF)}}\right]$ and any user located with the model area $\widehat{\mathcal{A}_{\mathsf{E}}}$ can be assumed to be within ${\mathcal{A}_{\mathsf{E}}}$ as well.
The proposed model for the EEA can be extended to account for the circuitry energy consumption by using the same variable replacement as for the REA, given in Eq. \eqref{eq:model_extension_subs}.

\section{A new framework for joint analysis of relay-generated ICI and energy}
\label{sec:IEA}

The assumption that all interferers are transmitting at maximum power is not realistic within an energy-efficiency context, and we propose to characterize the impact of a RS on the interference imposed on a neighboring user, given the  \textit{actual} relay energy consumption.
Such an analysis provides helpful support for interference management techniques as well. 
Indeed, techniques to mitigate interference from neighboring cells, as considered for next generation cellular OFDMA systems \cite{LTE_Fundamentals, interfence_management}, require a fine understanding of the interference profile over time and frequency. This is particularly true in a network where both DTx and RTx occur simultaneously.

\subsection{Approximation of the relay-generated interference}

Based on the previous analysis for probability of energy-efficient relaying and overall energy consumption, we compute the average interference received at a mobile user located in another cell, at distance $d_I$ from the interfering relay, as illustrated in Figure \ref{fig:cell_system}.
To isolate the impact of a relay station, we do not consider other source of interference. This assumption is fair given that relays are usually equipped with omnidirectional antennas, as opposed to base stations.

We denote $M_I(x_I,y_I)$ as a given neighboring user, $\gamma_{I} = K_I d_I^{\alpha_I}$ as the path-loss between the relay station and this user, and $s_I$ as the corresponding shadowing coefficient. The average interference received at $M_I(x_I,y_I)$ is expressed as

\begin{align}
\hspace*{-15pt} I(x_I,y_I) 
= \mathbb{E}_{s_{I}}\left[\mathbb{E} \left[E_r^{\mathsf{(RF)}}\right]\frac{s_{I}}{\gamma_{I}}\right]
= \mathbb{E} \left[E_r^{\mathsf{(RF)}}\right]\frac{\exp\left(\sigma_I^2 / 2\right)}{K_I d_I^{\alpha_I}}
\label{eq:interference_mxy}
\end{align}
where $ \mathbb{E} \left[E_r^{\mathsf{(RF)}}\right]$ is the energy radiated by the relay, averaged over $s_d$, $s_s$ and $s_r$. It is given by
\begin{align}
\hspace*{-15pt} \mathbb{E} \left[E_r^{\mathsf{(RF)}}\right] = & \;
	\mathbb{P}_{\text{CR}} \, \mathbb{E}\left[E_r^{\mathsf{(RF)}} \, \vert \,  \mathcal{C}_{\text{CR}}\right]
	+ \mathbb{P}_{\text{ER}} \, \mathbb{E}\left[E_r^{\mathsf{(RF)}} \, \vert \, \mathcal{C}_{\text{ER}}\right]
	\label{eq:average_energy_relay}
	.
\end{align}
First, the energy radiated by the relay when DTx is not feasible, due to fading, is computed as:
\begin{align}
	\mathbb{P}_{\text{CR}} \, \mathbb{E}\left[E_r^{\mathsf{(RF)}} \, \vert \,  \mathcal{C}_{\text{CR}}\right]
	= \mathbb{P}_{\text{CR}} g(r,\mathsf{E_R^{(m)}})
\end{align}
with $g$ given in Eq. \eqref{eq:g}. 
Second, using a proof similar to Lemma \ref{lemma:E_ER}, the energy radiated $\mathbb{P}_{\text{ER}} \mathbb{E} \left[E_{r}^{\mathsf{(RF)}} \, \vert \, \mathcal{C}_{\text{ER}}\right]$ by the RS when RTx is more energy-efficient than DTx, is lower-bounded by $\mathsf{E}_{\text{low}}^{\text{(I)}} $ with 
\begin{align}
\mathsf{E}_{\text{low}}^{\text{(I)}} 
 = 
\exp \left( \mu_{r} + \frac{\sigma_{r}^2}{2} \right) \left(\mathbb{P}_{\text{low}}^{(1,\text{I})}
+ \mathbb{P}_{\text{low}}^{(2,\text{I})} \right),
\label{eq:I_ER}
\end{align}
where $\mathbb{P}_{\text{low}}^{(1,\text{I})}$ and $\mathbb{P}_{\text{low}}^{(2,\text{I})}$ are computed similarly to $\mathbb{P}_{\text{low}}^{(1)}$ and $\mathbb{P}_{\text{low}}^{(2)}$ in Eq. \eqref{eq:Prob_low1} and \eqref{eq:Prob_low2}, but considering the scaled distribution $\exp(\sigma_{r}^2)E_{r}$ instead of $E_{r}$.

\begin{model}[\textbf{Approximation for $I(x_I,y_I)$}]
The relay-generated interference at a given neighboring user $M_I(x_I,y_I)$, located at a distance $d_I$ from the relay station, is lower-bounded by
\begin{align*}
\widehat{I}(x_I,y_I) &=
\mathbb{E} \left[\widehat{E_r^{\mathsf{(RF)}}}\right]\frac{\exp\left(\sigma_I^2 / 2\right)}{K_I d_I^{\alpha_I}} 
\\&=
\left( \mathbb{P}_{\text{CR}}g(r,\mathsf{E_R^{(m)}}) + \mathsf{E}_{\text{low}}^{\text{(I)}}\right)\frac{\exp\left(\sigma_I^2 / 2\right)}{K_I d_I^{\alpha_I}},
\end{align*}
where $g$ is given by Eq. \eqref{eq:g} and $\mathsf{E}_{\text{low}}^{\text{(I)}}$ by Eq. \eqref{eq:I_ER}.
\label{model:area_interference_model}
\end{model}

One can argue that upper-bounding the interference $\widehat{I}(x_I,y_I)$ would be more suitable for performance analysis. However, we highlight that the proposed lower-bound is tight, as shown in Section \ref{sec:validation}. It is thereby accurate enough to investigate interference-aware relay deployment. The proposed model for the ICI can be extended to account for the circuitry energy consumption by using the variable replacement of Eq. \eqref{eq:model_extension_subs}, similarly to REA and EEA.

\subsection{A new metric for analyzing energy and interference}

A relay station can provide significant \emph{energy gain} and coverage extension for the cell it serves. But, at the same time, it is an additional source of interference, implying that neighboring cells experience an \emph{energy loss} to maintain the same data rate for their own users.
In consequence, a relay deployment is efficient if the achieved energy gain, referred as $\upsilon_{\text{Gain}}$, is higher than the resulted energy loss, referred as $\upsilon_{\text{Loss}}$. We propose to use their ratio as a metric to jointly capture the aspects of energy and interference.

To evaluate the energy gain $\upsilon_{\text{Gain}}$, we consider a user $M(x,y)$ served by BS$_1$. We compare the energy $\mathsf{E_{1}^{(N_r=0)}}$ consumed to send data to this user when BS$_1$ is not supported by relay stations ($N_r=0$) and the energy $\mathsf{E_{1}^{(N_r)}}$ consumed when BS$_1$ is supported by $N_r$ relay stations. We have:
\begin{align*}
\left \lbrace \begin{array}{r l}
 \mathsf{E_{1}^{(N_r=0)}} &\hspace*{-5pt} \left[ x,y\right] =
 \eta_B E^{\mathsf{(RF)}} \left[ x,y\right]
+ \left( \mathsf{E_B^{(Tx)}} + \mathsf{E_U^{(Rx)}} \right)
+  \mathsf{E_B^{(idle)}} 
 \\
 \mathsf{E_{1}^{(N_r)}} & \hspace*{-5pt} \left[ x,y\right] =
 \eta_B E_B^{\mathsf{(RF)}} \left[ x,y\right]
+ \eta_R E_R^{\mathsf{(RF)}} \left[ x,y\right]
\\ & \hspace*{-5pt}
+ \left( \mathsf{E_B^{(Tx)}} + \mathsf{E_U^{(Rx)}} + \mathsf{E^{(dsp)}}\right)
+ \left( \mathsf{E_B^{(idle)}} + N_r \mathsf{E_R^{(idle)}} \right).
 \end{array}
 \right.
\end{align*}
Here, the various energy offsets accounted in $\mathsf{E_{idle}}$ are described in Section \ref{sec:coding_scheme}. We recall that we can focus only on Sector 1 since the three base stations BS$_1$, BS$_2$ and BS$_3$ within the considered cell are orthogonal and do use the same resource (in time and frequency).

Similarly, to evaluate the energy loss $\upsilon_{\text{Loss}}$, we consider a user $M(x,y)$, which is located in a neighboring cell $i\neq1$ and performs DTx. We compare the energy $\mathsf{E_{i}^{(N_r=0)}}$ consumed by the neighboring BS$_i$ to send data to this user when BS$_1$ is not supported by relay stations ($N_r=0$) and the energy $\mathsf{E_{i}^{(N_r)}}$, consumed to maintain the same rate, when BS$_1$ is supported by $N_r$ relay stations generating interference.
To isolate the impact of the $N_r$ relays, we assume an ideal network, without any other source of interference. Denoting $I\left( x,y\right)$ the interference received at $M(x,y)$ as defined in Eq. \eqref{eq:interference_mxy} of previous interference analysis,  we have $\forall i \neq 1$
\begin{align*}
\left \lbrace \begin{array}{r l}
 \mathsf{E_{i}^{(N_r=0)}} \left[ x,y\right] & \hspace*{-5pt}=  \eta_B E^{\mathsf{(RF)}} \left[ x,y\right]
+ \left( \mathsf{E_B^{(Tx)}} + \mathsf{E_U^{(Rx)}} \right)
+ \mathsf{E_B^{(idle)}}
 \\
\mathsf{E_{i}^{(N_r)}} \left[ x,y\right] & \hspace*{-5pt}=   \eta_B E^{\mathsf{(RF)}} \left[ x,y\right] \left(1 + \frac{2 I\left( x,y\right)}{N} \right)
\\
&+ \left( \mathsf{E_B^{(Tx)}} + \mathsf{E_U^{(Rx)}} \right)
+ \mathsf{E_B^{(idle)}} .
 \end{array}
 \right.
 \end{align*}

\textit{Remark:} In addition to energy gain, relay stations also provide coverage extension. To account for such extension and compute $\upsilon_{\text{Gain}}$ and $\upsilon_{\text{Loss}}$, we do not consider a power constraint for DTx. 

\begin{definition}
To capture both the energy and interference aspects, we define the ratio $\Gamma$ as
\begin{align}
\Gamma = \frac{\upsilon_{\text{Gain}}}{\upsilon_{\text{Loss}}} \quad 
 \text{with} \;
\left \lbrace
\begin{array}{l}
\upsilon_{\text{Gain}} =  \mathbb{E}\left[\frac{\mathsf{E_{1}^{(N_r=0)}}-\mathsf{E_{1}^{(N_r)}}}{\mathsf{E_{1}^{(N_r=0)}}} \right]
\\
\upsilon_{\text{Loss}} = \mathbb{E}\left[\frac{\underset {i \neq 1} {\sum} \mathsf{E_{i}^{(N_r=0)}}- \underset {i \neq 1} {\sum} \mathsf{E_{i}^{(N_r)}}}{ \underset {i \neq 1} {\sum} \mathsf{E_{i}^{(N_r=0)}}} \right]
\end{array}
\right.
\end{align}
\end{definition}
Here, $\upsilon_{\text{Gain}}$ is averaged over all users served by BS$_1$ and $\upsilon_{\text{Loss}}$ is averaged over all users located in the neighboring cells 2 to 7, as depicted in Figure \ref{fig:cell_system}.
If $\Gamma > 1$, the considered relay configuration is efficient, if $0< \Gamma < 1$, the relay stations result in more energy loss for neighboring cells than they actually provide energy gain in their own cell. If $\Gamma < 0$, relaying does not provide any energy gain, due to the circuitry consumption.

%

\begin{figure*}[th!]
\centering
\subfigure[Relay Efficiency Area]{
\begin{tikzpicture}
 \begin{axis}[
   width  = 0.33*\textwidth,
        height = 5cm,
        ymin=0, ymax=9,
        y label style={at={(axis description cs:0.2,0.5)},rotate=0,anchor=south},
		xtick={0.3,0.4,0.5,0.6,0.7,0.8},
        xlabel={$ \mathbb{P}_T$},
        ylabel={$\zeta_{\text{R}}$ in \%},
        legend style={at={(0.5,-0.2)},
        anchor=north,legend columns=-1},
        ybar,
        ymajorgrids = true,
         every axis/.append style={font=\footnotesize},  
      bar width=3pt,
         ]
         \addplot coordinates 
        {(0.30,0.37) (0.40,0.38) (0.50,0.365) (0.60,0.54) (0.70,0.96) (0.80,2.85)};
        
        \addplot coordinates 
        {(0.30,0.355) (0.40,0.645) (0.50,0.9583) (0.60,1.8017) (0.70,3.6567) (0.80,8.4)
        };
        
        \addplot coordinates 
        {(0.30,0.55) (0.40,0.9483) (0.50,1.54) (0.60,2.82) (0.70,5.82)};
        
    \end{axis}
    
\end{tikzpicture}
}
\subfigure[Energy Efficiency Area]{
\begin{tikzpicture}
 \begin{axis}[
  width  = 0.33*\textwidth,
        height = 5cm,
        ymin=0, ymax=1.6,
        y label style={at={(axis description cs:0.15,0.5)},rotate=0,anchor=south},
		xtick={150,250,350,450,550,650},
        xlabel={$ \mathsf{E}_T$ in mJ},
        ylabel={$\zeta_{\text{E}}$ in \%},
        legend style={at={(0.5,-0.2)},
        anchor=north,legend columns=-1},
        ybar,
        ymajorgrids = true,
         every axis/.append style={font=\footnotesize},  
      bar width=3pt,
          ]
         \addplot coordinates 
        {(150,0.16) (250,0.34) (350,0.48) (450,0.65) (550,0.85) (650,0.998)};
        
        \addplot coordinates 
       {(150,0.49) (250,0.873) (350,0.7567) (450,0.525) (550,0.50166) (650,0.65833)};
         \addplot coordinates 
       {(150,0.13) (250,1.49) (350,1.246) (450,0.85166) (550,0.655) (650,0.4733)};
         
       \end{axis}
\end{tikzpicture}
}
\subfigure[Approximation of $\mathbb{E} \left( \widehat{E_r^{(RF)}} \right)$]{
\begin{tikzpicture}
 \begin{axis}[
  width  = 0.33*\textwidth,
        height = 5cm,
        ymin=0, ymax=5.1,
        y label style={at={(axis description cs:0.2,0.5)},rotate=0,anchor=south},
		xtick={20,40,60,80,100},
		ytick={1,3,5},
		xlabel={$ \mathsf{E}_{T,r}$ in mJ},
        ylabel={$\zeta_{\text{I}}$ in \%},
        legend style={at={(0.5,-0.2)},
        anchor=north,legend columns=-1},
        ybar,
        ymajorgrids = true,
         every axis/.append style={font=\footnotesize},  
      bar width=3pt,
         ]
         
 \addplot coordinates 
        {(20,1.522) (40,1.453) (60,1.645) (80,2.44) (100,4.03)
        };
        
        \addplot coordinates 
        {(20,1.14) (40,1.308) (60,1.86) (80,2.6716) (100,4.25) 
        };
        
        \addplot coordinates 
        {(20,1.28) (40,1.54) (60,2.015) (80,2.83) (100,4.415)
        };         
         
%
%
        
    \end{axis}
\end{tikzpicture}
}

\centering \fbox{ \footnotesize
\tikz{ \draw [fill=blue!30!white,  draw=blue](0,0) rectangle (10pt,6pt); }
$\mathsf{E_{2Hop}^{(dsp)}}$= 0mJ \quad
\tikz{ \draw[fill=red!30!white,  draw=red] (0,0) rectangle (10pt,6pt); }
$\mathsf{E_{2Hop}^{(dsp)}}$= 50mJ \quad
\tikz{ \draw [fill=brown!30!white,  draw=brown](0,0) rectangle (10pt,6pt); }
$\mathsf{E_{2Hop}^{(dsp)}}$= 100mJ \quad 
}
\caption{Validation of the proposed models: error ratio $\zeta$ for various simulation settings}
\label{fig:validation}
\end{figure*}
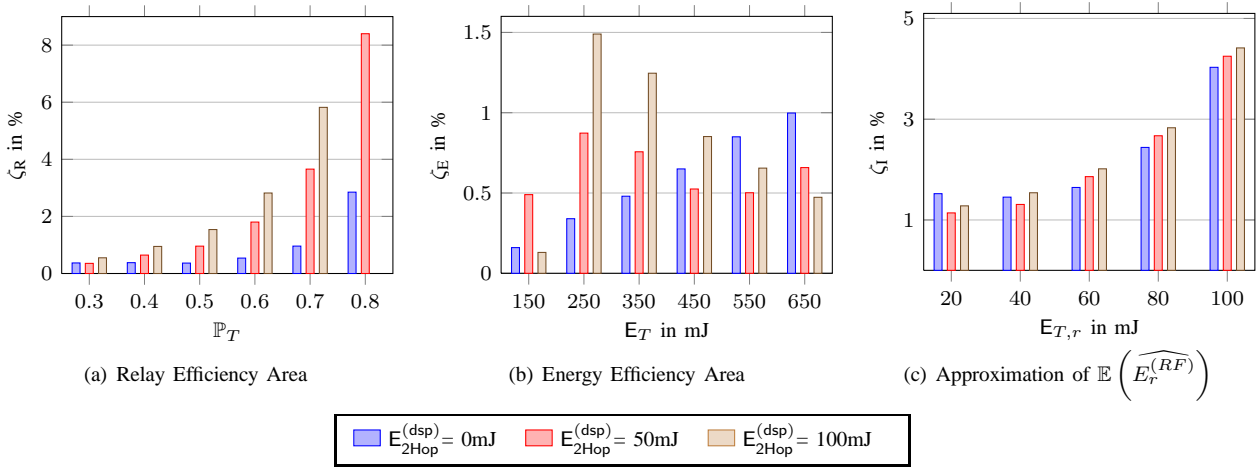

\section{Performance analysis for Energy- and ICI-efficient Relay Deployment}
\label{sec:performance}

In this section, we first validate the proposed models for relaying probability, energy consumption and interference. Then, we jointly analyze the energy consumption and the generated ICI using two-hop relaying. In the last subsection, the impact of the relay coding scheme on the network performance is explored. 
If not specified, we consider the simulation parameters of Table \ref{sim_param}, taken from \cite{correia2010, Auer2011,andreev2012,book_green_network}. 
For the channel gains,
the direct link $h_d$ and interference link $h_I$ are modelled by scenario C2 of the WINNER II project \cite{winner}, the RS-to-user link $h_r$ by scenario B1, the BS-to-RS link $h_b$ by B5c. We recall that we consider normalized transmissions of unitary length, setting up a direct relation between energy and power.

\begin{table}
\centering

\begin{tabular}{|c|c|c|c|c|}
\hline
\multirow{5}{*}{Energy} 
& $\mathsf{E_B^{(m)}}$ & 1J 
& $\mathsf{E_R^{(m)}}$ & 500mJ \\ 
& $\mathsf{E_B^{(idle)}}$ & 25mJ 
& $\mathsf{E_R^{(idle)}}$ & 10mJ \\
& $\mathsf{E^{(dsp)}_{2Hop}}$ & 0-50mJ 
& $\mathsf{E^{(dsp+)}_{pdf}}$ & 0-50mJ \\
& $\eta_B$ & 2.66 
& $\eta_R$ & 3.1 \\
& \multicolumn{2}{|c|}{$\mathsf{E_B^{(Tx)}} + \mathsf{E_U^{(Rx)}} $} & \multicolumn{2}{|c|}{90mJ}\\
\hline
\end{tabular}

\vspace{5pt}
\begin{tabular}{|c|c|c|c|c|c|}
\hline
\multirow{6}{*}{Channel} 
& $N$ & -93dBm &
\multirow{6}{*}{Others}
& $\mathbb{P}_{\text{out}}$ & 0.02 \\ 
& $\sigma_d$ & 6dB &
& $\mathcal{R}$ & 3bit/ch.use \\
& $\sigma_b$ & 3dB &
& $f_c$ & 2.6GHz \\ 
& $\sigma_r$ & 4dB &
& $\mathsf{H_b}$ & 30m \\ 
& $\sigma_I$ & 6dB  &
& $\mathsf{H_r}$ & 20m \\ 
& \multicolumn{2}{|c|}{Normalized Tx (1s)} &
& $\mathsf{H_u}$ & 1.5m \\ 
\hline
\end{tabular}
\caption{Simulation parameters }
\label{sim_param}
\end{table}

\subsection{Models validation}
\label{sec:validation}

For validation of the proposed model for REA, we account for all users $M(x,y)$ located in the simulated $\left(\mathcal{A}_{\text{R}},\mathbb{P}_T \right)$ but not declared in $\left(\widehat{\mathcal{A}_{\text{R}}},\mathbb{P}_T \right)$, for some given threshold $\mathbb{P}_T$ (or reversely, $M(x,y)$ is declared inside while it is actually outside). 
This means that, for such user, the effective relaying probability, obtained by Monte-Carlo simulations, is s.t. $\mathbb{P}_{\text{RTx}} \geq \mathbb{P}_T$ but the proposed lower-bound gives $ \mathbb{P}_{\text{low}} \left(x,y\right) + \mathbb{P}_{\text{CR}}\left(x,y\right) \leq \mathbb{P}_T$ (or reversely).  
We define the error ratio $\zeta_{\text{R}}$ as the proportion of such erroneous users, i.e.
\begin{align*}
&\zeta_{\text{R}} = \frac{\iint \mathds{1}_{\mathcal{E}_{\text{R}}} dx dy}{\iint \mathds{1}_{\mathcal{A}_{\text{R}}} dx dy}
\end{align*}
\begin{align*}
\text{with} \quad & \mathcal{E}_{\text{R}} = \left\lbrace
M \left(x,y\right) \in \mathcal{A}_{\text{R}}
\; \cap \;
 M \left(x,y\right) \notin \widehat{\mathcal{A}_{\text{R}}}
\right\rbrace
\\ & \quad
 \cup \;
\left\lbrace
M \left(x,y\right) \notin \mathcal{A}_{\text{R}}
\; \cap \;
 M \left(x,y\right) \in \widehat{\mathcal{A}_{\text{R}}}
\right\rbrace 
\end{align*}

Similarly, the error ratio $\zeta_{\text{E}}$ for the EEA refers to the proportion of erroneous users, for which $\mathbb{E} \left[E^{\mathsf{(RF)}}\right] \leq \mathsf{E}_T$ and $\mathbb{E} \left[\widehat{E^{\mathsf{(RF)}}}\right] \geq \mathsf{E}_T$ (or reversely), for some given threshold $\mathsf{E}_T$. 
For the interference analysis, we focus on the approximation of the average energy radiated by the relay and define the error ratio $\zeta_{\text{I}}$ as the proportion of users for which  $\mathbb{E} \left[E_r^{\mathsf{(RF)}}\right] \leq \mathsf{E}_{T,r}$ and $\mathbb{E} \left[\widehat{E_r^{\mathsf{(RF)}}}\right] \geq \mathsf{E}_{T,r}$ (or reversely), for some given threshold $\mathsf{E}_{T,r}$.

We plot in Figure \ref{fig:validation}(a) (resp. b and c) the error ratio $\zeta_{\mathsf{R}}$ (resp. $\zeta_{\mathsf{E}}$ and $\zeta_{\mathsf{I}}$) obtained for a wide range of $\mathbb{P}_T$ (resp. $\mathsf{E}_T$ and $\mathsf{E}_{T,r}$) and several $\mathsf{E^{(dsp)}}$. More precisely, the plotted $\zeta_{\text{X}}$ stands for the ratio averaged over various RS-to-BS distances ($\mathsf{D}_b \in \left[600,1000\right]$m), various user rates ($\mathcal{R} \in \left[2,4\right]$bits/ch. use) and the two outdoors propagation environments described in \cite[Appendix A]{Journal2}.
For the purpose of validation, we consider a wide cell coverage by fixing the outage requirement $\mathbb{P}_\text{out}$ to 0.1, which is very large for encoded data.

From Figure \ref{fig:validation}(a), we observe that the error ratio for the REA does not exceed 3\% when the circuitry consumption is not considered ($\mathsf{E^{(dsp)}}=0$). Although the model for $\mathsf{E^{(dsp)}}>0$ is less accurate, such error increase does not impact at all the proposed model for energy consumption, as illustrated in Figure \ref{fig:validation}(b). 
Indeed, when an error occurs and a user $M(x,y)$ is wrongly declared in $\widehat{\mathcal{A}_{\mathsf{R}}}$ while it is not (or reversely), we have $E_{b} + E_{r} \simeq E_{d}$.
For the EEA, the error ration does not exceed 1.5\% and, for the interference approximation, it is below 5\%, as plotted in Figure \ref{fig:validation}(c). Furthermore, error are mostly located at cell edge and, by considering restricted cell coverage ($\mathbb{P}_\text{out}=0.02$), as for the rest of this paper, the error ratio for the ICI falls under 2.5\%.

The proposed models for the REA, EEA and ICI thus provide very efficient frameworks for performance analysis with regards to both accuracy and savings in the simulation time. Non-model based simulations were also performed taking 50 000 samples for the channel gains of each link and required several hours for a single relay configuration. By comparison, model-based simulations were completed in less than 3 seconds. Subsequently, we will refer to "simulations" for "model-based simulations".

\subsection{On the minimal energy consumption per unit area}

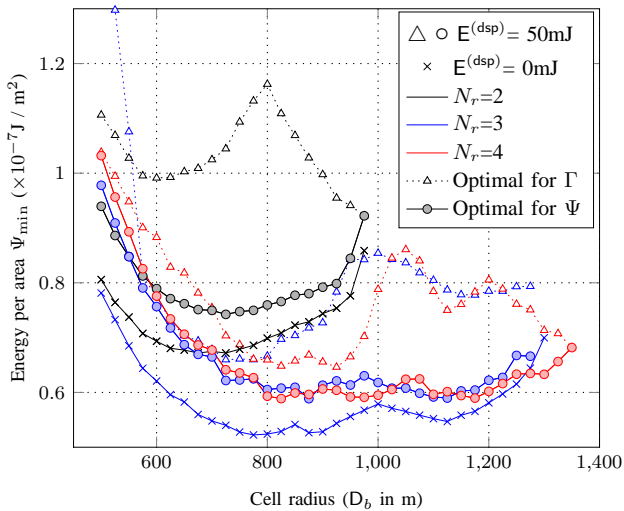
\begin{figure} 
\centering
\centering \resizebox{0.95\columnwidth}{!}{%

\begin{tikzpicture}[scale=4,cap=round,>=latex]

 \pgfplotsset{
    grid style = {
      dash pattern = on 0.05mm off 1mm,
      line cap = round,
      black,
      line width = 0.5pt
    }
  }

  \begin{axis}[%
  	ymin=0.5, ymax=1.3,
  	xmin=450, xmax=1400,
    xlabel=Cell radius ($\mathsf{D}_b$ in m),%
    ylabel=Energy per area $\Psi_{\min}$ ($\times 10^{-7}$J / m$^2$),%
    y label style={at={(axis description cs:0.09,0.5)},rotate=0,anchor=south},
	height=8cm,
	every axis/.append style={font=\footnotesize},  
    grid=major,%
    legend style={at={(axis cs:1390,1.295)},anchor=north east, nodes=right, font=\small},%
    mark size=1.8pt]

	
\addplot[only marks, color=black,mark=o/triangle] table[x=coverage ,y=eea ,col sep=semicolon] {data_Energy_Nr2_050mW_new.txt};
	\addlegendentry{\hspace*{-22pt} $\triangle$ {\large $\circ$} $\mathsf{E^{(dsp)}}$= 50mJ}  
	    	
    \addplot[only marks, color=black, mark=x,mark size=2pt] table[x=coverage ,y=eea ,col sep=semicolon] {data_Energy_Nr2_000mW_new.txt};
	\addlegendentry{$\mathsf{E^{(dsp)}}$= 0mJ}  
	
%
%
    \addplot[solid, color=black] table[x=coverage ,y=eea ,col sep=semicolon] {data_Energy_Nr2_050mW_new.txt};
	\addlegendentry{$N_r$=2}  
\addplot[solid, color=blue] table[x=coverage ,y=eea ,col sep=semicolon] {data_Energy_Nr3_050mW_combined.txt};
	\addlegendentry{$N_r$=3}  
	
	\addplot[solid, color=red] table[x=coverage ,y=eea ,col sep=semicolon]
	{data_Energy_Nr4_050mW_combined.txt};
	\addlegendentry{$N_r$=4}


\addplot[dotted, color=black,mark=triangle*,mark options={solid,fill=white},mark size=1.8pt]
table[x=coverage ,y=GAMMAeea ,col sep=semicolon]
{data_Energy_Nr2_050mW_new.txt};
\addlegendentry{Optimal for $\Gamma$}

\addplot[dotted, color=blue,mark=triangle*,mark options={solid,fill=white},mark size=1.8pt,forget plot]
table[x=coverage ,y=GAMMAeea ,col sep=semicolon]
{data_Energy_Nr3_050mW_combined.txt};

\addplot[dotted, color=red,mark=triangle*,mark options={solid,fill=white},mark size=1.8pt,forget plot]
table[x=coverage ,y=GAMMAeea ,col sep=semicolon]
{data_Energy_Nr4_050mW_combined.txt};


\addplot[solid, color=blue,mark=x,mark options={solid,fill=blue!30!white},mark size=2pt,forget plot]
table[x=coverage ,y=eea ,col sep=semicolon]
{data_Energy_Nr3_000mW_corrected.txt};

\addplot[solid, color=black,mark=x,mark options={solid,fill=black!30!white},mark size=2pt,forget plot]
table[x=coverage ,y=eea ,col sep=semicolon]
{data_Energy_Nr2_000mW_new.txt};


\addplot[solid, color=black,mark=*,mark options={solid,fill=black!30!white}]
table[x=coverage ,y=eea ,col sep=semicolon]
{data_Energy_Nr2_050mW_new.txt};
\addlegendentry{Optimal for $\Psi$}

\addplot[solid, color=blue,mark=*,mark options={solid,fill=blue!30!white},forget plot]
table[x=coverage ,y=eea ,col sep=semicolon]
{data_Energy_Nr3_050mW_combined.txt};

\addplot[solid, color=red,mark=*,mark options={solid,fill=red!30!white},forget plot]
table[x=coverage ,y=eea ,col sep=semicolon]
{data_Energy_Nr4_050mW_combined.txt};

\end{axis}

\end{tikzpicture}
}
\caption{Minimal energy consumption per unit area $\Psi_{\min}$}
\label{fig:TO_Energy_Coverage}
\end{figure}

As a first step, we do not consider the impact of relays in terms of interference and analyze the relay performance regarding the EEA only, in the shadowing model used in the WINNER project.
The Joule-per-bit metric has been widely used for energy efficiency analysis. Yet, in practice, a large part of the network is primarily providing coverage and does not operate at full load, even at peak traffic hours. Due to the energy dissipated in circuitry to maintain the network operational, the energy efficiency can be particularly poor under low-traffic loads and restricted coverage \cite{correia2010}. 
To capture the aspect of the cell coverage, we consider the maximal energy consumption $\mathsf{E}_{\max}$ that is required to send data at rate $\mathcal{R}$ to any user located within the cell sector, i.e. $\mathcal{A}_{\text{sector}} \in \left(\mathcal{A}_{\mathsf{E}},\mathsf{E}_{\max}\right)$, and divide it by the sector area $\mathcal{A}_{\text{sector}}$. It is expressed in Joule-per-square-meter and denoted as $\Psi$:
\begin{align}
\Psi = \frac{\mathsf{E}_{\max} + \mathsf{E_{idle}}}{\mathcal{A}_{\text{sector}}}
\quad \text{where} \quad
\mathcal{A}_{\text{sector}} = \frac{\sqrt{3}}{2} \mathsf{D}_b^2.
\end{align}
Here, $\mathsf{E}_T$ accounts for the transmit energy $\mathbb{E} \left[E^{\mathsf{(RF)}}\right]$ as in Eq. \eqref{eq:average_energy}, the RF amplifier coefficients $\eta_R$ and $\eta_B$ and the energy $\mathsf{E^{(dsp)}}$ dissipated at the relay for decoding and re-encoding. $\mathsf{E_{idle}}$ refers to the other energy offsets described in Section \ref{sec:coding_scheme}, for the considered coding scheme. 
In Figure \ref{fig:TO_Energy_Coverage}, we plot the minimal feasible energy per unit area $\Psi$ as a function of the cell radius $\mathsf{D}_b$, considering different values for $N_r$ and $\mathsf{E^{(dsp)}}$. To do so, for each considered set $\left( \mathsf{D}_b, N_r, \mathsf{E^{(dsp)}}\right)$, we find the location for the $N_r$ relays which minimizes $\Psi$. Note that proceeding this way would not have been possible in a reasonable time without using the proposed models.

\textit{Remark:} For comparison purpose, Figure \ref{fig:TO_Energy_Coverage} also plots the value for $\Psi$ achieved when the relay location maximizes the metric $\Gamma$ (namely "Optimal for $\Gamma$" in the figure). We will come back to this point in the following subsection.

\begin{result}
The energy offset $\mathsf{E^{(dsp)}}$, consumed for decoding and re-encoding at the relay station has severe impact on the cell energy efficiency, except for very large cell radius, where the overall energy is dominated by the RF transmit consumption.
\end{result}
As an example, at $\mathsf{D}_b=$700m, increasing $\mathsf{E^{(dsp)}}$ from 0 to 50mW leads to a degradation in $\Psi_{\min}$ of 11\% with $N_r$=2 and of 22\% with $N_r$=3.

\begin{result}
Up to $N_r$=3, increasing the number of relays per cell allows significant energy reduction. Passed this limit, the gain provided by additional relays is not sufficient to compensate for the idle energy $N_r \mathsf{E_R^{(idle)}}$, dissipated to maintain the network operational.
\end{result}

With $N_r$=2, the minimal value $\Psi_{\min}$ is equal to 0.74e-7J/m$^2$, and adding one more RS (making Nr = 3) allows a gain of 28\%, where $\Psi_{\min}$ then reduces to  0.58e-7J/m$^2$.
However, there is no significant performance gain for increasing the number of RS from  $N_r$=3 to $N_r$=4.

We recall that we have considered as performance metric the maximal energy necessary to transmit data at a given rate to any user of the cell sector, as given by the EEA, rather than simply the average energy consumption of the cell. 
Simulations show that the relay configurations optimal for $\Psi$ and for the average do not match. When optimized for the average consumption, the relay configuration results in a severe energy increase at cell edge (from 10\% to 25\%), meaning that there exists a major performance gap between users of the cell center, with strong channel, and cell-edge users, with weak channel. Thus, the average optimization criteria does not provide fairness as does our proposed Energy Efficiency Area and related metric $\Psi$.


\subsection{A new energy-interference trade-off on relay deployment}

\begin{figure} 
\centering
\centering \resizebox{0.95\columnwidth}{!}{%

\begin{tikzpicture}[scale=4,cap=round,>=latex]

 \pgfplotsset{
    grid style = {
      dash pattern = on 0.05mm off 1mm,
      line cap = round,
      black,
      line width = 0.5pt
    }
  }
%

  \begin{axis}[%
	ymin=-3.6, ymax=4.2,
	xmin=450, xmax=1400,
    ytick = {-4,-3,-2,-1,0,1,2,3},
	xlabel=Cell radius ($\mathsf{D}_b$ in m),%
    ylabel=Energy-to-interference ratio $\Gamma_{\max}$,%
    y label style={at={(axis description cs:0.09,0.5)},rotate=0,anchor=south},
	height=8cm,
	every axis/.append style={font=\footnotesize},  
    grid=major,%
      legend style={at={(axis cs:1390,-3.55)},anchor=south east, nodes=right, font=\small},%
   mark size=2pt]

%


\addplot[only marks, color=black,mark={triangle,o}] table[x=coverage ,y=gammaMax ,col sep=semicolon] {data_Gamma_cell_Nr2_050mW_new.txt};
	\addlegendentry{\hspace*{-22pt} $\triangle$ {\large $\circ$} $\mathsf{E^{(dsp)}}$= 50mJ}  
	    	
    \addplot[only marks, color=black, mark=x,mark size=2pt] table[x=coverage ,y=gammaMax ,col sep=semicolon] {data_Gamma_cell_Nr2_000mW_new.txt};
	\addlegendentry{$\mathsf{E^{(dsp)}}$= 0mJ}


    \addplot[solid, color=black]  table[x=coverage ,y=gammaMax ,col sep=semicolon] {data_Gamma_cell_Nr2_050mW_new.txt};
	\addlegendentry{$N_r$=2}  
\addplot[solid, color=blue]  table[x=coverage ,y=gammaMax ,col sep=semicolon] {data_Gamma_cell_Nr3_050mW_combined.txt};
	\addlegendentry{$N_r$=3}  
 
 \addplot[solid, color=red]  table[x=coverage ,y=gammaMax ,col sep=semicolon] {data_Gamma_cell_Nr4_050mW_combined.txt};
	\addlegendentry{$N_r$=4} 
%


\addplot[dotted, color=blue,mark=*,mark options={solid,fill=white},mark size=1.5pt,forget plot]
table[x=coverage ,y=gammaEEA ,col sep=semicolon]
{data_Gamma_cell_Nr3_050mW_combined.txt}; 	 
\addplot[dotted, color=red,mark=*,mark options={solid,fill=white},mark size=1.5pt,forget plot]
table[x=coverage ,y=gammaEEA ,col sep=semicolon]
{data_Gamma_cell_Nr4_050mW_combined.txt}; 	 

	  \addplot[dotted, color=black,mark=*,mark options={solid,fill=white},mark size=1.5pt]
table[x=coverage ,y=gammaEEA ,col sep=semicolon]
{data_Gamma_cell_Nr2_050mW_new.txt};
	\addlegendentry{Optimal for $\Psi$}

 \addplot[solid, color=black,mark=x,mark size=2pt,mark options={solid,fill=black!30!white},forget plot]
table[x=coverage ,y=gammaMax ,col sep=semicolon]
{data_Gamma_cell_Nr2_000mW_new.txt}; 	


\addplot[solid, color=black,mark=triangle*,mark options={solid,fill=black!30!white}]
table[x=coverage ,y=gammaMax ,col sep=semicolon]
{data_Gamma_cell_Nr2_050mW_new.txt};
	\addlegendentry{Optimal for $\Gamma$}

\addplot[solid, color=blue,mark=triangle*,mark options={solid,fill=blue!30!white},forget plot]
table[x=coverage ,y=gammaMax ,col sep=semicolon]
{data_Gamma_cell_Nr3_050mW_combined.txt}; 	
\addplot[solid, color=red,mark=triangle*,mark options={solid,fill=red!30!white},forget plot]
table[x=coverage ,y=gammaMax ,col sep=semicolon]
{data_Gamma_cell_Nr4_050mW_combined.txt}; 	  
	  
%
          
    \addplot[line width=1pt,dashed, color=OliveGreen] coordinates 
        {(500,0) (1400,0)};
    \addplot[line width=1pt,dashed, color=OliveGreen] coordinates 
        {(500,1) (1400,1)};
   
    \end{axis}

\end{tikzpicture}
}
\caption{Maximal ratio energy-to-interference $\Gamma_{\max}$}
\label{fig:FFR}
\end{figure}
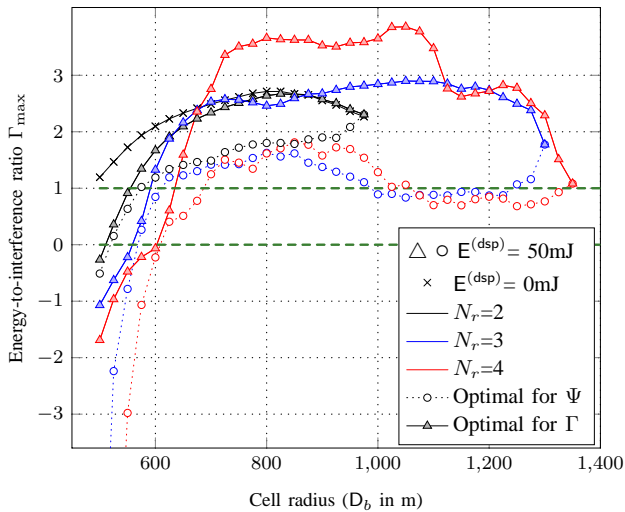

We now investigate the network performance in terms of both energy and interference, by using the $\Gamma$-metric. 
In Figure \ref{fig:FFR}, we plot the maximal feasible $\Gamma_{\max}$ as a function of the cell radius $\mathsf{D}_b$, for different values of $N_r$ and $\mathsf{E^{(dsp)}}$. To do so, for each considered considered set $\left( \mathsf{D}_b, N_r, \mathsf{E^{(dsp)}}\right)$, we find the location for the $N_r$ relays which maximizes $\Gamma$.

First, for a given number of relays, we observe that $\Gamma_{\max}$ is poor for small-size cells ($\mathsf{D}_b \leq$600m) and is significantly affected by $\mathsf{E^{(dsp)}}$. For example, at $\mathsf{D}_b \leq$550m, $\Gamma_{\max}$ is divided by 2 when $\mathsf{E^{(dsp)}}$ is increased from 0 to 50mW. On the contrary, for wider cells, the overall energy consumption is dominated by the RF transmit energy and the interference issue is relaxed due to distance. The impact of $\mathsf{E^{(dsp)}}$ is minor and $\Gamma_{\max}$ increases.
For the sake of clarity, the case $\mathsf{E^{(dsp)}}$=0mJ has been plotted for $N_r$=2 only, but results do not change for larger number of relays.

\begin{result}
Except for very short cell radius, the energy offset $\mathsf{E^{(dsp)}}$, consumed for decoding and re-encoding at the relay station, has little impact on the energy-to-interference ratio $\Gamma$.
\end{result}

Second, as depicted in Figure \ref{fig:FFR}, increasing the number of relays generally improves $\Gamma_{\max}$. We have shown in the previous subsection that relay configurations with $N_r$=3 and $N_r$=4 provide around the same minimal energy consumption per unit area $\Psi_{\min}$. On the contrary, when accounting for the interference generated by relays, the case $N_r$=4 largely outperforms $N_r$=3 and, for example, $\Gamma_{\max}$ is increased from 2.46 to 3.66 at $\mathsf{D}_b $=800m.
While the transmit energy gain achieved by increasing the number of relays from 3 to 4 is just enough to compensate for the additional idle energy $\mathsf{E_R^{(idle)}}$ (which is consumed whether or not data is transmitted and affects both $\upsilon_{\text{Gain}}$ and $\Psi_{\min}$), it is largely beneficial for the neighboring cells (reduced $\upsilon_{\text{Loss}}$). 
\begin{result}
Accounting for the interference generated by relays, deploying many relay stations potentially closer to cell edge but transmitting at lower power is more efficient than deploying few relay stations far from cell edge but serving a large part of the cell.
\end{result}


We now compare the performance achieved when the relay configuration is optimized either for $\Psi$ or $\Gamma$ and propose guidelines for efficient relay deployment.
To do so, Figure \ref{fig:TO_Energy_Coverage} also depicts the value for $\Psi$ obtained when the relay location is optimized for $\Gamma$ (namely, "Optimal for $\Gamma$" in the figure) and reversely, Figure \ref{fig:FFR} depicts the value for $\Gamma$ achieved by a location optimized for $\Psi$ ("Optimal for $\Psi$"). 
A first important remark is that both relay configurations are essentially distinct and provide notably different performance. To illustrate the gap between such deployment options, we plot in Figure \ref{fig:optimal_position} the relay configurations optimal for $\Gamma$ and for $\Psi$ with $N_r=2$ and various $\mathsf{D}_b$. 
Moreover, we observe from Figure \ref{fig:FFR} that the value for $\Gamma$ achieved with an energy-efficient relay deployment (optimal for $\Psi$) is below 1 for 1000m $\leq \mathsf{D}_b$ and $N_r \geq$ 3, meaning that the network performance is actually degraded.

\begin{figure} 
\centering
\centering \resizebox{0.75\columnwidth}{!}{%

\begin{tikzpicture}[scale=4,cap=round,>=latex]

 \pgfplotsset{
    grid style = {
      dash pattern = on 0.05mm off 1mm,
      line cap = round,
      black,
      line width = 0.5pt
    }
  }

  \begin{axis}[%
	width=0.82\textwidth,
    height=\textwidth,
    	xmin=-100, xmax=1100, ymin=-900, ymax=900,
	every axis/.append style={font=\large},  
    grid=major,%
    legend style={at={(axis cs:-90,890)},anchor=north west, nodes=right, font=\huge},%
    mark size=2.0pt]

\addplot[only marks, color=black, mark size =4, mark=o, line width=1.0] table[x=xEEA ,y=yEEA ,col sep=semicolon] {data_Relay_position_Nr2_050mW_new.txt};
\addlegendentry{\; for Energy ($\Psi$)}  
\addplot[only marks, color=black, mark size =4, mark=triangle, line width=1.5] table[x=xGAMMA ,y=yGAMMA ,col sep=semicolon] {data_Relay_position_Nr2_050mW_new.txt};
\addlegendentry{\; for Interference ($\Gamma$)} 
%
%
%
 \addplot[only marks, color=black, mark size =4, mark=o, line width=1.0,forget plot] table[x=xEEA ,y=yyEEA ,col sep=semicolon] {data_Relay_position_Nr2_050mW_new.txt};
\addplot[only marks, color=black, mark size =4, mark=triangle, line width=1.5,forget plot] table[x=xGAMMA ,y=yyGAMMA ,col sep=semicolon] {data_Relay_position_Nr2_050mW_new.txt};

\addplot[only marks, color=blue, mark size =4, mark=o, line width=1.0]
    coordinates {    (345, 415)(345, -415)    };
    \addplot[only marks, color=blue, mark size =4, mark=triangle, line width=1.0]
    coordinates {    (345, 415)(345, -415)    }; 
\draw {(axis cs:487.5,  844.37) -- (axis cs:0,0) -- (axis cs:487.5,- 844.37)} [color=blue, line width=1.5pt, loosely dotted];
\draw (axis cs:450,-730) node[anchor = north east] {\color{blue} \huge $\mathsf{D}_b$ =975m};

\addplot[only marks, color=ProcessBlue, mark size =4, mark=o, line width=1.0]
    coordinates {    (525, 330)(525, -330)    };
    \addplot[only marks, color=ProcessBlue, mark size =4, mark=triangle, line width=1.5]
    coordinates {    (480, 225)(480, -225)    }; 
\draw {(axis cs:600,  649.52) -- (axis cs:225,0) -- (axis cs:600,-649.52)} [color=ProcessBlue, line width=1.5pt, loosely dotted];
\draw (axis cs:610,-660) node[anchor = west] {\color{ProcessBlue} \huge $\mathsf{D}_b$ =750m};

\addplot[only marks, color=red, mark size =4, mark=o, line width=1.0]
    coordinates {    (750, 265)(750, -265)
    };
    \addplot[only marks, color=red, mark size =4, mark=triangle, line width=1.5]
    coordinates {    (715, 180)(715, -180)
    };
\draw {(axis cs:725,  433.01) -- (axis cs:475,0) -- (axis cs:725,- 433.01)} [color=red,line width=1.5pt, loosely dotted];
\draw (axis cs:730,-450) node[anchor = west] {\color{red} \huge $\mathsf{D}_b$ =500m};

	\addplot[line width=1pt, color=black]
    coordinates {
    (100,  844.37)(487.5,  844.37)(975,0)
	(487.5,- 844.37)(100,- 844.37)
    };
            
     \draw {(axis cs:100,  844.37) -- (axis cs:0,  844.37)} [line width=1.25pt, loosely dotted];
     \draw {(axis cs:100,- 844.37) -- (axis cs:0,- 844.37)} [line width=1.25pt, loosely dotted];
     
	  \draw[->,line width=2.5pt, color=black] (axis cs:750,  350) -- (axis cs:400, 500);
	   
\coordinate (center) at (axis cs:0,0);
\coordinate (bs1) at (axis cs:975,0);
  
\draw (bs1) node[diamond, draw, fill = black,scale=1] {};
\draw (axis cs:1040,75) node {\huge BS$_1$};
\draw (axis cs:450,510) node [anchor=west] {\huge Coverage extension};
  
    \end{axis}

\end{tikzpicture}
}
\caption{Optimal relay positions ($N_r $ = 2, $\mathsf{E}^{(\Delta)}$ = 50mW)}
\label{fig:optimal_position}
\end{figure}

\begin{result}
Energy-efficient relay deployment does not necessarily lead to interference reduction and reversely, an interference-aware location is suboptimal for the cell energy consumption.
\end{result}


Based on the above results, we propose a guideline for efficient relay deployment regarding both $\Psi$ and $\Gamma$. First, for short cell radius (550 $\leq \mathsf{D}_b \leq$ 700m), deploying two relay stations per sector located to minimize the energy consumption per unit area $\Psi$ can be considered as the best option, $\Gamma$ remains above 1, meaning that the overall network performance is not degraded.

Second, for wider cell size (700m $\leq \mathsf{D}_b$), deploying four relay stations per sector provides the optimal results for both $\Gamma$ and $\Psi$. However, current cellular networks are already reaching saturation and negotiating new site agreement for antenna deployment is getting ever harder for cellular operators. Thus, we argue that considering $N_r =$ 3 may actually be the best practical choice. When 700m $\leq \mathsf{D}_b \leq$ 1000m, optimizing the relay deployment for $\Gamma$ (resp. $\Psi$) does not degrade too much the performance in $\Psi$ (resp. $\Gamma$), such that both deployment options can be considered. However, for 700m $\leq \mathsf{D}_b$, a deployment optimized for $\Psi$ should not be considered since the overall network performance is degraded ($\Gamma < 1$).

\subsection{Impact of the relay coding scheme}

Up to now, we have shown that a relay deployment can be energy-efficient at the scale of a single cell (measured by $\Psi$) but without necessarily being efficient at a larger scale (measured by $\Gamma$). We now investigate the performance achieved by the energy-optimized relaying schemes described in Section \ref{sec:coding_scheme}. 
Regarding the $\Gamma$-metric, EO-PDF maximizes the energy gain $\upsilon_{\text{Gain}}$, while decreasing at the same time the energy loss $\upsilon_{\text{Loss}}$ experienced by neighboring cells. As detailed in \cite{Journal1}, the energy transmitted by EO-PDF is more uniformly spread over the two transmission phases and over both the direct and relaying links,  reducing at same time the power peaks causing high interference. The IR-PDF scheme minimized the use of the relay station by transmitting the most data possible via the direct link. The energy loss $\upsilon_{\text{Loss}}$ is minimized, but in return, the energy gain $\upsilon_{\text{Gain}}$ is reduced.
We recall that the circuitry consumption of such partial DF schemes is expressed as $\mathsf{E^{(dsp)}_{pdf}} = \mathsf{E_{2Hop}^{(dsp)}} + \mathsf{E_{pdf}^{(dsp+)}}$,
where $\mathsf{E_{pdf}^{(dsp+)}}$ is an additional offset accounting for their increased complexity.

\subsubsection{Objective and simulation settings}

For analysis, we consider a cell sector aided by two RS only ($N_r=2$), with low, medium and maximal cell radius ($\mathsf{D}_b \in \left\lbrace 600, 800, 975\text{m}\right\rbrace$).
Such configuration provides suboptimal performance in both $\Psi$ and $\Gamma$, compared to a configuration with more RS, but it offers valuable infrastructure cost reduction and deployment simplicity for a cellular operator. 
We consider as performance basis an energy-efficient relay deployment where both RS are located to minimize the energy per unit area $\Psi$ consumed by two-hop relaying.

To investigate how optimized relaying schemes can alleviate the interference issue, we derive the optimal utilization of coding schemes within the cell sector. To do so, we compare for each user location the performance achieved by two-hop relaying, EO-PDF and IR-PDF and select for each the one that increases $\Gamma$. Proceeding this way, we design a spatially-optimized utilization of coding schemes and draw a map showing the cell areas where to use each coding scheme to maximize $\Gamma$. Such map is illustrated in Figure \ref{fig:map_coding_scheme} for $\mathsf{D}_b$=800m and $\mathsf{E_{pdf}^{(dsp+)}} \in \left[0,50\right]$mJ.

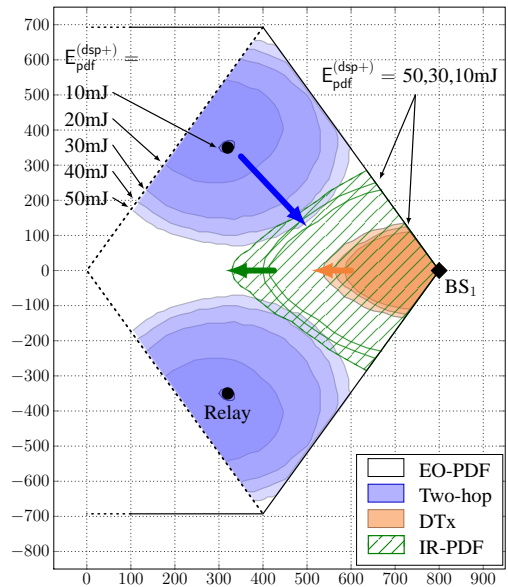
\begin{figure} 
\centering
\centering \resizebox{0.75\columnwidth}{!}{%

\begin{tikzpicture}[scale=4,cap=round,>=latex]

 \pgfplotsset{
    grid style = {
      dash pattern = on 0.05mm off 1mm,
      line cap = round,
      black,
      line width = 0.5pt
    }
  }

  \begin{axis}[%
	width=0.82\textwidth,
    height=\textwidth,
    	xmin=-75, xmax=950, ymin=-850, ymax=750,
	every axis/.append style={font=\Large},  
    grid=major,%
    legend style={nodes=right, font=\normalsize},%
    legend pos={south east},%
    mark size=2.0pt]

\draw {(axis cs:400,  692.82) -- (axis cs:0,0) -- (axis cs:400,-692.82) -- (axis cs:800,0)} [color=white, fill=white, line width=1pt];

\addplot[solid, color=white, fill=white, line width=1] table[x=x ,y=y ,col sep=semicolon] {data_EO_800_100mW.txt};    
\addplot[solid,  color=white, fill=white, line width=1] table[x=x ,y=yy ,col sep=semicolon] {data_EO_800_100mW.txt}; 
\addplot[solid, color=white, fill=white, line width=1] table[x=xcentre ,y=ycentre ,col sep=semicolon] {data_EO_800_100mW.txt};

\addplot[solid, color=black, line width=1] table[x=x ,y=y ,col sep=semicolon] {data_EO_800_60mW.txt};
\addplot[solid, color=black, line width=1] table[x=x ,y=yy ,col sep=semicolon] {data_EO_800_60mW.txt};

\addplot[solid, color=black, fill=blue!60!white, opacity=0.25, line width=1] table[x=x ,y=y ,col sep=semicolon] {data_EO_800_70mW.txt};
\addplot[solid, color=black, fill=blue!60!white, opacity=0.25, line width=1] table[x=x ,y=yy ,col sep=semicolon] {data_EO_800_70mW.txt};

\addplot[solid, color=black, fill=blue!60!white, opacity=0.25, line width=1] table[x=x ,y=y ,col sep=semicolon] {data_EO_800_80mW.txt};
\addplot[solid,  color=black, fill=blue!60!white, opacity=0.25, line width=1] table[x=x ,y=yy ,col sep=semicolon] {data_EO_800_80mW.txt};

\addplot[solid, color=black, fill=blue!60!white,  opacity=0.25, line width=1] table[x=x ,y=y ,col sep=semicolon] {data_EO_800_90mW.txt};
\addplot[solid,  color=black, fill=blue!60!white,  opacity=0.25, line width=1] table[x=x ,y=yy ,col sep=semicolon] {data_EO_800_90mW.txt};

\addplot[solid, color=black, fill=blue!60!white,  opacity=0.25, line width=1] table[x=x ,y=y ,col sep=semicolon] {data_EO_800_100mW.txt};    
\addplot[solid,  color=black, fill=blue!60!white,  opacity=0.25, line width=1] table[x=x ,y=yy ,col sep=semicolon] {data_EO_800_100mW.txt};


\addplot[solid, color=black, fill=Orange,  opacity=0.35, line width=1] table[x=xcentre ,y=ycentre ,col sep=semicolon] {data_EO_800_60mW.txt}; 

\addplot[solid, color=black, fill=Orange, opacity=0.35, line width=1] table[x=xcentre ,y=ycentre ,col sep=semicolon] {data_EO_800_80mW.txt};

\addplot[solid, color=black, fill=Orange,  opacity=0.35, line width=1] table[x=xcentre ,y=ycentre ,col sep=semicolon] {data_EO_800_100mW.txt};

    
\addplot[loosely dashed, color=green!50!black, pattern=my north east lines,line space=5pt, pattern color=green!50!black,  opacity=0.75, line width=0.5] table[x=x ,y=y ,col sep=semicolon] {data_IR_800_100mW.txt};   

\addplot[solid, color=green!50!black,  opacity=0.75, line width=1] table[x=x ,y=y ,col sep=semicolon] {data_IR_800_100mW.txt}; 

\addplot[solid, color=green!50!black,  opacity=0.75, line width=1] table[x=x ,y=y ,col sep=semicolon] {data_IR_800_80mW.txt};  
\addplot[solid, color=green!50!black,  opacity=0.75, line width=1] table[x=x ,y=yy ,col sep=semicolon] {data_IR_800_80mW.txt};        
    
\addplot[solid, color=green!50!black,  opacity=0.75, line width=1] table[x=x ,y=y ,col sep=semicolon] {data_IR_800_60mW.txt};  
\addplot[solid, color=green!50!black,  opacity=0.75, line width=1] table[x=x ,y=yy ,col sep=semicolon] {data_IR_800_60mW.txt};     
    
    
\draw {(axis cs:400,  692.82) -- (axis cs:0,0) -- (axis cs:400,-692.82)} [color=black, line width=1.5pt, loosely dotted];

	\addplot[line width=1pt, color=black]
    coordinates {
    (100, 692.82)(400,  692.82)(800,0)
	(400,-692.82)(100,-692.82)
    };
            
     \draw {(axis cs:100,  692.82) -- (axis cs:0,  692.82)} [line width=1.25pt, loosely dotted];
     \draw {(axis cs:100,- 692.82) -- (axis cs:0,- 692.82)} [line width=1.25pt, loosely dotted];
     
	   
\coordinate (center) at (axis cs:0,0);
\coordinate (bs1) at (axis cs:800,0);

\draw (bs1) node[diamond, draw, fill = black,scale=1] {};
\draw (axis cs:850,-50) node {\LARGE BS$_1$};
\draw (axis cs:320,-350) node[circle, draw, fill = black,scale=1] {};
\draw (axis cs:320,350) node[circle, draw, fill = black,scale=1] {};
\draw (axis cs:320,-410) node {\LARGE Relay};
 
\draw [->,color=green!50!black, line width=5pt](axis cs:425,0) -- (axis cs:315,0);
\draw [->,color=Orange, line width=5pt](axis cs:600,0) -- (axis cs:510,0);
\draw [->,color=blue, line width=5pt](axis cs:350,325) -- (axis cs:500,125); 
 
\draw (axis cs:550,500) node [anchor=south west] (text1) {\hspace*{-10pt}\LARGE $\mathsf{E^{(dsp+)}_{pdf}}=$ 50,30,10mJ};
\draw [->,color=black, line width=1pt](text1.south) -- (axis cs:665,260);
\draw [->,color=black, line width=1pt](text1.south) -- (axis cs:730,130);

\draw (axis cs:-60,550) node [anchor=south west] {\LARGE $\mathsf{E^{(dsp+)}_{pdf}}=$};
\draw (axis cs:-60,475) node [anchor=south west] (text10) {\LARGE 10mJ};
\draw [->,color=black, line width=1pt](text10.east) -- (axis cs:295,350);
\draw (axis cs:-60,400) node [anchor=south west] (text20) {\LARGE 20mJ};
\draw [->,color=black, line width=1pt](text20.east) -- (axis cs:168,310);
\draw (axis cs:-60,325) node [anchor=south west] (text30) {\LARGE 30mJ};
\draw [->,color=black, line width=1pt](text30.east) -- (axis cs:130,232);
\draw (axis cs:-60,250) node [anchor=south west] (text40) {\LARGE 40mJ};
\draw [->,color=black, line width=1pt](text40.east) -- (axis cs:105,205);
\draw (axis cs:-60,175) node [anchor=south west] (text50) {\LARGE 50mJ};
\draw [->,color=black, line width=1pt](text50.east) -- (axis cs:95,180);
 

 \node [rectangle, draw=black, fill=white, anchor= south east] (mylegend) at (axis cs:945,-840) { \LARGE 
\begin{tabular}{l l}
\tikz{ \draw [fill=white,  draw=black](0,0) rectangle (30pt,15pt); }  & EO-PDF \\
\tikz{ \draw [fill=blue!30!white,  draw=blue!80!black](0,0) rectangle (30pt,15pt); }  & Two-hop \\
\tikz{ \draw [fill=Orange!50!white,  draw=Orange!60!black](0,0) rectangle (30pt,15pt); }  & DTx \\
\tikz{ \draw [loosely dashed, color=green!50!black, pattern=my north east lines,line space=5pt, pattern color=green!50!black, line width=0.5](0,0) rectangle (30pt,15pt);
\draw [solid, color=green!50!black, line width=1](0,0) rectangle (30pt,15pt); } & IR-PDF 
\end{tabular}
};

    \end{axis}

\end{tikzpicture}
}
%

\caption{Optimal utilization of coding schemes in a cell sector}
\label{fig:map_coding_scheme}
\end{figure}

\subsubsection{Spatial analysis}

With a medium cell radius ($\mathsf{D}_b$=800m), when the additional circuitry consumption $\mathsf{E_{pdf}^{(dsp+)}}$ is below 10mJ, EO-PDF outperforms two-hop relaying in almost the whole cell area, as illustrated in Figure \ref{fig:map_coding_scheme}.
For higher values of $\mathsf{E_{pdf}^{(dsp+)}}$, EO-PDF does not provide sufficient reduction in the transmit energy around the RS and cannot compensate for the dissipated energy $\mathsf{E_{pdf}^{(dsp+)}}$. However, even with $\mathsf{E_{pdf}^{(dsp+)}}$=50mJ (i.e. EO-PDF consumed twice as much energy as two-hop relaying to process data), EO-PDF still outperforms two-hop relaying when the user-to-relay link is weaker. 
In larger cells ($\mathsf{D}_b$=975m), EO-PDF outperforms two-hop relaying for any user location and any value of $\mathsf{E_{pdf}^{(dsp+)}} \in \left[0,50\right]$mJ. The additional circuitry consumption $\mathsf{E_{pdf}^{(dsp+)}}$ has only marginal effect on $\Gamma$. 
In smaller cells ($\mathsf{D}_b$=600m), the overall energy consumption is dominated by the circuitry consumption. Except from the case $\mathsf{E^{(dsp)}_{pdf}}= \mathsf{E_{2Hop}^{(dsp)}}$, the EO-PDF scheme improves the cell performance only if used very far from RS.


Also note that the IR-PDF scheme can only reach the same performance as other coding schemes but without outperforming them. The corresponding cell areas are plotted in green.

\subsubsection{Coding schemes and relay deployment}


\begin{figure} 
\centering
\centering \resizebox{0.85\columnwidth}{!}{%

\begin{tikzpicture}[scale=4,cap=round,>=latex]

 \pgfplotsset{
    grid style = {
      dash pattern = on 0.05mm off 1mm,
      line cap = round,
      black,
      line width = 0.5pt
    }
  }

  \begin{axis}[%
	xmin=-5, xmax=55,
	ymin=0.99, ymax=5.01,
	xlabel= Additional circuitry consumption $\mathsf{E^{(dsp+)}_{pdf}}$ in mJ,%
    ylabel=Energy-to-interference ratio $\Gamma_{\max}$,%
    y label style={at={(axis description cs:0.1,0.5)},rotate=0,anchor=south},
	every axis/.append style={font=\footnotesize},  
    grid=major,%
    legend style={at={(axis cs:38,4.60)},anchor=north west, nodes=right},%
    mark size=2pt]

%
 
\addplot[solid, color=blue, line width=1.2] table[x=e_pdf ,y=Gamma_optim ,col sep=semicolon] {data_CS_gamma_600m.txt};   
\addlegendentry{$\mathsf{D}_b$=600m}; 	
\addplot[solid, color=green!50!black, line width=1.2] table[x=e_pdf ,y=Gamma_optim ,col sep=semicolon] {data_CS_gamma_800m.txt};
\addlegendentry{$\mathsf{D}_b$=800m}; 
\addplot[solid, color=red, line width=1.2] table[x=e_pdf ,y=Gamma_optim ,col sep=semicolon] {data_CS_gamma_975m.txt};
\addlegendentry{$\mathsf{D}_b$=975m};  
 
\addplot[dashed, color=black, line width=1.2] coordinates {(0,1.669404) (50,1.669404) } ;
\addlegendentry{2Hop - $N_r=2$} ; 
\addplot[solid, color=black, line width=1.2] coordinates {(0,1.326) (50,1.326)} ;
\addlegendentry{2Hop - $N_r=3$} ; 

\addplot[solid, color=black, mark=*, line width=1.2] table[x=e_pdf ,y=Gamma_optim ,col sep=semicolon] {data_CS_gamma_600m.txt};   
\addlegendentry{EO-PDF / 2Hop};


\draw [dashed, color=blue, line width=1.2](axis cs: 0,1.669404) -- (axis cs:50,1.669404) ;
\draw [solid, color=blue, line width=1.2](axis cs: 0,1.326) -- (axis cs:50,1.326) ;

\draw [dashed, color=green!50!black, line width=1.2](axis cs: 0,2.638590) -- (axis cs:50,2.638590) ;
\draw [solid, color=green!50!black, line width=1.2](axis cs: 0,2.459) -- (axis cs:50,2.459) ;

\draw [dashed, color=red, line width=1.2](axis cs: 0,2.305105) -- (axis cs:50,2.305105) ; 
\draw [solid, color=red, line width=1.2](axis cs: 0,2.81497) -- (axis cs:50,2.81497) ;

\addplot[solid, color=blue, mark=*, line width=1.2] table[x=e_pdf ,y=Gamma_optim ,col sep=semicolon] {data_CS_gamma_600m.txt};   	
\addplot[solid, color=green!50!black, mark=*, line width=1.2] table[x=e_pdf ,y=Gamma_optim ,col sep=semicolon] {data_CS_gamma_800m.txt};
\addplot[solid, color=red, mark= *, line width=1.2] table[x=e_pdf ,y=Gamma_optim ,col sep=semicolon] {data_CS_gamma_975m.txt};

    \end{axis}

\end{tikzpicture}
}
\caption{$\Gamma_{\max}$ with optimal utilization of coding schemes}
\label{fig:gain_coding_scheme}
\end{figure}
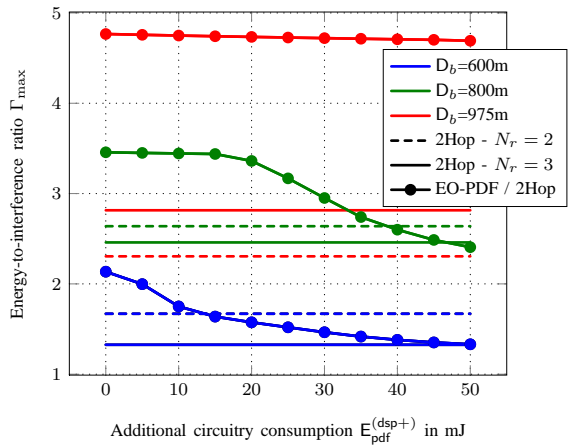

In the following, we denote "combination EO-PDF / 2Hop" as the spatially-optimized utilization of coding schemes, as previously described and illustrated in Figure \ref{fig:map_coding_scheme} and plot in Figure \ref{fig:gain_coding_scheme}, the maximal energy-to-interference ratio $\Gamma_{\max}$ achieved by this combination and by two-hop relaying only, with $N_r=2$ and $N_r=3$.

\textit{Remark:} For some data points, two-hop relaying only outperforms the combination EO-PDF / 2Hop.
Indeed, such combination is used with an energy-efficient deployment (optimized for $\Psi$), which is sub-optimal for $\Gamma$. On the contrary, the plotted performance of two-hop relaying is obtained with an interference-aware deployment (optimized for $\Gamma$) and can be understood as the maximum $\Gamma$ reached by two-hop relaying in the best feasible relay configuration. 
\begin{result}
In a small-size cell, the energy-optimized partial decode-forward scheme (EO-PDF) is severely affected by its increased circuitry consumption such that the optimal combination EO-PDF / 2Hop does not provide much performance enhancement when $\mathsf{E_{pdf}^{(dsp+)}} \geq $30mJ.
\end{result}
For example, when $\mathsf{D}_b$=600m and $\mathsf{E_{pdf}^{(dsp+)}} =$50mJ, the EO-PDF / 2Hop combination  achieves $\Gamma$=1.33, which outperforms two-hop relaying  with same relay location ($\Gamma$=1.19) but is below the performance of two-hop relaying when used with an interference-aware location ($\Gamma$=1.67).

\begin{result}
When the additional circuitry consumption $\mathsf{E_{pdf}^{(dsp+)}} $ is low, or when the cell radius is wide, the combination EO-PDF / 2Hop approaches and even outperforms the maximal $\Gamma_{\max}$ achieved by two-hop relaying, even with a relay location suboptimal for $\Gamma$. It also allows a reduction of the number of relays per sector for the same, or even better, ratio $\Gamma$.
\end{result}
As observed in Figure \ref{fig:gain_coding_scheme}, the maximal $\Gamma_{\max}$ achieved by the combination EO-PDF / 2Hop is higher than two-hop relaying with $N_r=3$, for almost any cell size and any additional circuitry consumption. It even outperforms the case $N_r=4$ for wide coverage extension ($\mathsf{D}_b$=975m).

\section{Conclusion}
\label{sec:conclusion}

We have highlighted a new trade-off on relay deployment for cellular networks that balances system energy efficiency and performance loss experienced by neighboring users due to the additional interference generated by  relays. To this end, we first formulated a spatial definition of the relay efficiency and proposed three tractable models allowing meaningful analysis without requiring time-consuming simulations. Next, we analyzed the correlative impact of the circuitry consumption, the location and number of relays as well as the relaying coding scheme on the network performance. By significantly reducing the transmit power peaks, energy-optimized coding schemes alleviate the interference issue, and by performing well even with suboptimal relay location or reduced number of relays, they offer valuable deployment flexibility.

\appendices 

\section{Proof of Lemma \ref{lemma:P_low}: Lower-bound for $\mathbb{P}_{\text{ER}}$}

\label{app:lower_bound}

The probability $\mathbb{P}_{\text{ER}}$ is expressed as follows:
\begin{align}
\mathbb{P}_{\text{ER}} = \overset{\mathsf{E_R^{(m)}}} {\underset{0}{\int}} \overset{\mathsf{E_B^{(m)}}} {\underset{0}{\int}}
\overset{\mathsf{E_B^{(m)}}} {\underset{0}{\int}}
\mathbb{P}\left( E_b + E_r \leq E_d \right)
d E_d d E_b d E_r
\label{eq:triple_integral}
\end{align}

Among the wide possibilities for lower bounds, we aim at removing the power constraints which condition $E_b + E_r \leq E_d$. Hence, we decompose $\mathbb{P}_{\text{ER}}$ into elementary probabilities that discard the triple integral. We get $\mathbb{P}_{\text{ER}} =  \mathbb{P}^{(1)} +\mathbb{P}^{(2)}$ with

\vspace*{-5pt}
{\small
 \begin{align*}
&\mathbb{P}^{(1)} = \mathbb{P}\left( E_b + E_r \leq E_d  \leq \mathsf{E_R^{(m)}} \right) 
\\
& \mathbb{P}^{(2)} = \mathbb{P} \left( \mathsf{E_R^{(m)}} \leq E_d  \leq \mathsf{E_B^{(m)}} \; \cap \; E_r \leq \mathsf{E_R^{(m)}} \; \cap \; E_b + E_r \leq E_d  \right).
\end{align*}
} \vspace*{-10pt}

First, we find a lower bound for $\mathbb{P}^{(1)} $. We have:

\vspace*{-5pt}
{\small
\begin{align*}
 \mathbb{P}& \left( E_b + E_r \leq E_d \right) = \;  \mathbb{P}\left( E_b + E_r \leq E_d  \leq \mathsf{E_R^{(m)}} \right)
 \\  
 + & \; \mathbb{P} \left( \mathsf{E_B^{(m)}} +\mathsf{E_R^{(m)}} \leq E_d \right)
 \mathbb{P} \left( E_b \leq \mathsf{E_B^{(m)}} \; \cap \;  E_r \leq \mathsf{E_R^{(m)}} \right)
\tag{a} \\  
 + & \; \mathbb{P} \left( \mathsf{E_B^{(m)}} +\mathsf{E_R^{(m)}} \leq E_d 
 \; \cap \; \left( \mathsf{E_B^{(m)}} \leq E_b  \; \cup \; \mathsf{E_R^{(m)}} \leq E_r \right)
 \right. \\ & \left.  
 \; \cap \; E_b + E_r \leq E_d
 \right)
\tag{b}\\ 
+ & \; \mathbb{P} \left( \mathsf{E_R^{(m)}} \leq E_d \leq \mathsf{E_B^{(m)}} +\mathsf{E_R^{(m)}} \right)
\mathbb{P} \left( E_b + E_r \leq \mathsf{E_R^{(m)}} \right)
\tag{c} \\ 
+ & \; \mathbb{P} \left( \mathsf{E_R^{(m)}} \leq E_d \leq \mathsf{E_B^{(m)}} +\mathsf{E_R^{(m)}}
\; \cap \; \mathsf{E_R^{(m)}} \leq E_b + E_r   \leq  E_d \;  \right)
\tag{d}
\end{align*}
} \vspace*{-5pt}

Regarding the probabilities of lines (a) and (c), the condition $E_b + E_r \leq  E_d$ necessarily holds given energy constraints and these two probabilities can be readily computed in closed-form.
Second, the probabilities of lines (b) and (d), denoted $\mathbb{P}^{(b)} $ and $\mathbb{P}^{(d)} $ respectively, can only be expressed in integral form, but are respectively upper-bounded by 
\begin{align*}
\mathbb{P}^{(b)}
 \leq \mathbb{P} &\left( \mathsf{E_B^{(m)}} + \mathsf{E_R^{(m)}} \leq E_d 
\, \cap \, \left( \mathsf{E_B^{(m)}} \leq E_s  \, \cup \, \mathsf{E_R^{(m)}} \leq E_r \right)
 \right)
\\
\mathbb{P}^{(d)}
\leq \mathbb{P}  & \left( \mathsf{E_R^{(m)}} \leq E_d \leq \mathsf{E_B^{(m)}} +\mathsf{E_R^{(m)}} \right.
\\ 
& \left. \cap \, \mathsf{E_R^{(m)}} \leq E_s + E_r    \leq  \mathsf{E_B^{(m)}} +\mathsf{E_R^{(m)}} \,  \right) 
\end{align*}
Plugging these upper-bounds into the expression for $ \mathbb{P} \left( E_b + E_r \leq E_d \right)$, we obtain the lower-bound $\mathbb{P}^{(1)}_{\text{low}} $ given in Eq. \eqref{eq:Prob_low1}.
Next, we have:
\begin{align*}
\mathbb{P}^{(2)} = & \;
\mathbb{P} \left( \mathsf{E_R^{(m)}} \leq E_d  \leq \mathsf{E_B^{(m)}} \; \cap \; E_b + E_r \leq \mathsf{E_R^{(m)}} \right)
\\ & +
\mathbb{P} \left( \mathsf{E_R^{(m)}} \leq E_b + E_r \leq E_d  \leq \mathsf{E_B^{(m)}} \; \cap \; E_r \leq \mathsf{E_R^{(m)}}\right)
\end{align*}
Note that the energy $E_r$ consumed by BS to transmit data to RS is generally low thanks to strong channel conditions. Thereby, the probability in second line approaches 0 and $\mathbb{P}^{(2)}$ can be tightly lower-bounded by $\mathbb{P}^{(2)}_{\text{low}} $ of Eq. \eqref{eq:Prob_low2}.
We now show that $\mathbb{P}_{\text{low}} = \mathbb{P}^{(1)}_{\text{low}} + \mathbb{P}^{(2)}_{\text{low}}$ can be approximated by a closed-form expression. First, 
$\mathbb{P}_{\text{low}}$, and thus $\mathbb{P}_{\text{CR}}$ and $\mathbb{P}_{\text{CD}}$, are computed using the cumulative distribution function $\Phi$ of the standard normal distribution:
\begin{align}
\mathbb{P} \left( E_k \leq \mathsf{E_k^{(m)}} \right)
= & \; \Phi \left(\frac{\ln \left(\mathsf{E_k^{(m)}} \right) - \mu_k}{\sigma_k} \right) 
\label{eq:erf}
\end{align}
where $\mu_k$ and $\sigma_k$ are given in Eq. \eqref{eq:E_i_0}. Even if $\Phi$ is written as an integral, it is widely available in scientific tools through well-known tables, such that its computation does not imply much complexity and can be considered as closed-form.
In addition, $\mathbb{P}_{\text{low}}$ requires the computation of $\mathbb{P} \left( E_b + E_r \leq \mathsf{E_B^{(m)}}+\mathsf{E_R^{(m)}}  \right)$ and
$\mathbb{P} \left( E_b + E_r \leq E_d  \right) = \mathbb{P} \left( E_{b}^{(0)} \frac{s_d}{s_b} + E_{r}^{(0)} \frac{s_d}{s_r} \leq  E_{d}^{(0)} \right)$,
both of which involve the sum of two log-normal random variables.
Such distributions do not have a closed-form expression, but have been extensively explored in the literature \cite{gao2009,beaulieu}. In this work, we consider the Fenton-Wilkinson approach and approximate these sum distributions by log-normal random variables.
$E_b + E_r$ is approximated by $E_{b+r} \sim \log \mathcal{N} \left(\mu_{b+r}, \sigma_{b+r}^2\right)$, where $\mu_{b+r}$ and $\sigma_{b+r}^2$ are computed as given in \cite[Eq. (9-12)]{beaulieu}.
Similar computation can be performed for $E_{b}^{(0)} \frac{s_d}{s_b} + E_{r}^{(0)} \frac{s_d}{s_r}$, taking into account the correlation coefficient between $E_{b}^{(0)} \frac{s_d}{s_b}$ and $E_{r}^{(0)} \frac{s_d}{s_r}$. 

Consequently, using the Fenton-Wilkinson approach, we have decomposed $\mathbb{P}_{\text{low}}(x,y)$ into elementary probabilities that can be computed in closed-form using Eq. \eqref{eq:erf}, thus avoiding the computation of a triple integral for each possible user location $M(x,y)$ within the cell.

\section{Proof of Lemma \ref{lemma:E_ER}: Bound for $\mathbb{P}_{\text{ER}} \mathbb{E} \left[E_{b+r} \, \vert \, \mathcal{C}_{\text{ER}}\right] $}

\label{app:E_ER}

Here, we use the decomposition for $\mathbb{P}_{\text{low}}$ that has been proposed in Lemma \ref{lemma:P_low}:
\begin{align*}
& \mathbb{P}_{\text{ER}} \mathbb{E} \left[E_{b+r} \, \vert \, \mathcal{C}_{\text{ER}}\right]
\leq \; E_1 + E_2 
\\ & \text{with} \quad
 E_1 = \mathbb{P}^{(1)} \mathbb{E} \left[E_{b+r} \, \vert \, E_b + E_r \leq E_d \leq \mathsf{E_R^{(m)}} \right] \quad \text{and} 
\\ 
&  E_2 = \mathbb{P}^{(2)}  \mathbb{E} \left[E_{b+r} \, \vert \, \mathsf{E_R^{(m)}} \leq E_d  \leq \mathsf{E_B^{(m)}} \; \cap \; E_b + E_r \leq \mathsf{E_R^{(m)}} \right]  \\
 & = \; g(b+r,\mathsf{E_R^{(m)}}) \mathbb{P}^{(2)}
\; \geq \; g(b+r,\mathsf{E_R^{(m)}}) \mathbb{P}_{\text{low}}^{(2)}
\end{align*} 
where $g$ is given by Eq. \eqref{eq:g}. With regards to $E_1$, we come back the integral form. Denoting $f_k$ the probability density function of $E_k$, we get:

{\small
\begin{align*}
& \hspace*{-15pt} E_1 =
\overset{\mathsf{E_R^{(m)}}} {\underset{0}{\int}}
\overset{\mathsf{E_d}} {\underset{0}{\int}}
\left(E_s + E_r\right) f_{s+r}\left( E_s + E_r \right) f_d\left(E_d \right) d(Es + Er) d E_d
\\
& \hspace*{-15pt} 
\simeq 
\overset{\mathsf{E_R^{(m)}}} {\underset{0}{\int}}
\exp \left( \mu_{s+r} + \frac{\sigma_{s+r}^2}{2} \right)
\Phi \left(- \sigma_{s+r} + \frac{\ln \left(E_d \right) - \mu_{s+r}}{\sigma_{s+r}}  \right)
f_d\left(E_d \right) d E_d
\\
& \hspace*{-15pt} 
= \exp \left( \mu_{s+r} + \frac{\sigma_{s+r}^2}{2} \right)
\mathbb{P}\left( \exp \left( \sigma_{s+r}^2 \right)(E_b + E_r) \leq E_d  \leq \mathsf{E_R^{(m)}} \right)
\end{align*}
}

\bibliographystyle{IEEEtranN}
{\footnotesize 
\bibliography{RefJournal3}
}
\fontsize{10}{10}
\selectfont

\end{document}